\def\thecomma{\ifx,\thenext \else\ifx;\thenext \else\ifx.\thenext
\else\ifx!\thenext \else\ifx:\thenext\else\ifx)\thenext \else \
\fi\fi\fi\fi\fi\fi}
\def\condblank{\futurelet\thenext\thecomma}
\def\ie{{\it i.e.}\condblank}
\def\fref#1{Fig.~\ref{#1}}
\def\cref#1{Condition~\ref{#1}}
\def\Cref#1{Corollary~\ref{#1}}
\def\eref#1{(\ref{#1})}
\def\sref#1{Sect.~\ref{#1}}
\def\lref#1{Lemma~\ref{#1}}
\def\tref#1{Theorem~\ref{#1}}
\def\dref#1{Definition~\ref{#1}}
\def\pref#1{Proposition~\ref{#1}}
\numberwithin{equation}{section}
\newenvironment{myitem}
{\begin{itemize}
  \setlength{\itemsep}{1pt}
  \setlength{\parskip}{0pt}
  \setlength{\parsep}{0pt}}
{\end{itemize}}
\newenvironment{myenum}
{\begin{enumerate}
  \setlength{\itemsep}{1pt}
  \setlength{\parskip}{0pt}
  \setlength{\parsep}{0pt}}
{\end{enumerate}}
\newtheorem*{oneshot}{Hypothesis~\ref{h:assumption}}
\newtheorem{theorem}{Theorem}[section]
\newtheorem{lemma}[theorem]{Lemma}
\newtheorem{proposition}[theorem]{Proposition}
\newtheorem{definition}[theorem]{Definition}
\newtheorem{hypothesis}[theorem]{Hypothesis}
\newtheorem{remark}[theorem]{Remark}
\newtheorem{corollary}[theorem]{Corollary}
\def\II{\mathcal{I}}
\def\KK{\mathcal{K}}
\def\MM{\mathcal{M}}
\def\NN{\mathcal{N}}
\def\TT{\mathcal{T}}
\def\OO{\mathcal{O}}
\def\EE{\mathcal{E}}
\def\KK{\mathcal{K}}
\def\LL{\mathcal{L}}
\def\YY{\mathcal{Y}}
\def\L{{\rm{L}}}
\def\R{{\rm{R}}}
\def\S{{\bf{S}}}
\let\phi=\varphi
\let\rho=\varrho
\def\i{{\rm i}}
\def\s{{\rm s}}
\def\ss{{\bf s}}
\def\emx#1{\emph{#1}\index{#1}}
\begin{document}
\title{Trees of nuclei and bounds\\ on the number of triangulations of
  the 3-ball}
\author{P.~Collet${}^1$, J.-P.~Eckmann${}^{2,3}$ and M.~Younan${}^2$}
\institute{${}^1$Centre de Physique Th\'eorique, CNRS UMR 7644,\\ Ecole
  Polytechnique, F-91128 Palaiseau Cedex (France)\\
${}^2$D\'epartement de Physique Th\'eorique,
Universit\'e de Gen\`eve, CH-1211 Gen\`eve 4 (Switzerland)\\
${}^3$Section de Math\'ematiques,
Universit\'e de Gen\`eve, CH-1211 Gen\`eve 4 (Switzerland)}
\maketitle
\begin{abstract}
  Based on the work of Durhuus-J{\'o}nsson and Benedetti-Ziegler, we
  revisit the question of the number of triangulations of the
  3-ball. We introduce a notion of nucleus (a triangulation of the
  3-ball without internal nodes, and with each internal face having at
  most 1 external edge). We show that every triangulation can be built
  from trees of nuclei. This leads to a new reformulation of Gromov's
  question: We show that if the number of rooted nuclei with $t$
  tetrahedra has a bound of the form $C^t$, then the number of rooted
  triangulations with $t$ tetrahedra is bounded by $C_*^t$.
\end{abstract}
\tableofcontents

\section{Introduction}
In this paper, we study the question of the number of triangulations of
the 3-ball by tetrahedra. The case of the 2-ball was exactly solved by
Tutte in \cite{Tutte1962}. He showed in particular that the number of rooted
triangulations of the 2-sphere with $N$ vertices is
$\OO(1)\,N^{-5/2}(256/27)^{N}$. It is natural to ask if analogous bounds
are true in higher dimension. Such results could have applications in
models of Statistical Mechanics (foams \cite{Magnasco1992}, quantum gravity
\cite{ADJ1997}, or  glassy dynamics \cite{Proglass2007,astesherrington1999,eckmannglass2007,JPY2011}) where the
exponential rate of growth can be interpreted as an entropy.  In
\cite{gromov2000}, Gromov asked whether the number of triangulations of the
3-sphere is bounded by $C^N$ for some constant $C$ when there are $N$
tetrahedra (facets) in the triangulation. To date, this question remains
open. However Pfeifle and Ziegler proved in \cite{PZ2004} a super
exponential lower bound for the number of triangulations of the 3-ball
as a function of the number of vertices. This does not answer
negatively Gromov's question (which is in terms of the number of tetrahedra)
but makes the problem of proving an exponential bound in terms of the
number of tetrahedra even more challenging.

There are several studies in the direction of answering
the question, which we summarize now. 
In \cite{DJ1995}, Durhuus and
J{\'o}nsson gave the construction of a class of triangulations for which
they could show a bound of the form $C^N$. These triangulations are
obtained by building a tree of tetrahedra, which is obtained by
starting from a root tetrahedron and attaching tetrahedra to its
faces, and then attaching further tetrahedra to the new open
faces. Each tetrahedron is attached to the tree with just one
face. It is a common feature of tree-like constructions that they lead
to bounds of the form $C^N$: The prime example in our context is of
course the celebrated work of Tutte \cite{Tutte1962} mentioned above.
Coming back to Durhuus and J{\'o}nsson, once the tree is constructed, they
now collapse adjacent faces of the tree in such a way that at the end of the
procedure a triangulation of the 3-sphere is obtained. Their main
result says that the number of ways in which to do this is again
exponentially bounded. In this way, they construct a set of
triangulations of the 3-sphere with tetrahedra which is exponentially
bounded. They ask whether these are all possible triangulations.

In a later development, Benedetti and Ziegler \cite{BZ2011}, show that
the Durhuus and J{\'o}nsson construction, which they call ``locally
constructible'' (LC), does \emph{not} capture all triangulations of
the 3-sphere. Namely, they show that a 3-sphere with a 3-complicated
knot (made by tetrahedra) is not LC. They also carefully
discuss relations between LC and other classes of constructibility.

In the present paper, we define a larger class of triangulations, with
a construction similar to that of Durhuus and J{\'o}nsson, but which uses
more general basic elements than the simple tetrahedron, which we call
\emx{nuclei}. We prefer to work with 3-balls, and bounds on 3-spheres
can be obtained from a bound on triangulations of a tetrahedron. This
is usually done by removing a tetrahedron from the 3-sphere (see for
example \cite[Section 3]{BZ2011} ).

{\bf Nuclei} are defined as triangulations of the 3-ball with the following
special properties:
\begin{myenum}
  \item They have no internal nodes.
  \item Internal faces have at most \emph{one} external edge.
\end{myenum}
Obviously, the tetrahedron is a nucleus. The Furch-Bing ball
\cite{Furch1924}, \cite{Bing1964} and \cite{H2001Online} and the Bing 2-room house \cite{Bing1964} and \cite{H2001Online},
which are not nuclei,  can be reduced by our
procedure to one non-trivial nucleus, each.
The smallest non-trivial nucleus we know of, given in Table~\ref{table1}, has 12 nodes, and 37
tetrahedra, of which 17 have no external face. Nodes are numbered from 1 to 12, and Table~\ref{table1} gives
a list of the 37 tetrahedra.

\begin{table}[h]
\begin{center} 
\scalebox{0.8}{
\begin{tabular}{
|r r r r|r r r r|r r r r|r r r r|r r r r|}
\hline
{\bf 1}&{\bf 3}&{\bf 4}&10&
1&3&5&10&
{\bf 1}&{\bf 3}&5&{\bf 11}&
{\bf 1}&{\bf 4}&{\bf 6}&10&
1&5&7&8\\
1&{\bf 5}&{\bf 7}&{\bf 10}&
{\bf 1}&5&{\bf 8}&{\bf 11}&
{\bf 1}&{\bf 6}&7&{\bf 8}&
1&6&7&10&
{\bf 2}&3&{\bf 5}&{\bf 9}\\
2&3&5&11&
2&3&8&9&
2&{\bf 3}&{\bf 8}&{\bf 11}&
{\bf 2}&{\bf 5}&{\bf 6}&11&
{\bf 2}&{\bf 6}&11&{\bf 12}\\
{\bf 2}&{\bf 7}&{\bf 10}&11&
{\bf 2}&{\bf 7}&11&{\bf 12}&
{\bf 2}&8&{\bf 9}&{\bf 10}&
2&8&10&11&
3&4&9&10\\
{\bf 3}&{\bf 4}&9&{\bf 12}&
3&{\bf 5}&{\bf 9}&{\bf 10}&
{\bf 3}&{\bf 8}&9&{\bf 12}&
{\bf 4}&{\bf 5}&{\bf 6}&11&
{\bf 4}&{\bf 5}&{\bf 7}&8\\
4&5&8&11&
4&6&10&11&
4&7&8&9&
{\bf 4}&{\bf 7}&9&{\bf 12}&
4&8&9&10\\
4&8&10&11&
6&7&8&9&
6&7&9&11&
6&7&10&11&
{\bf 6}&{\bf 8}&9&{\bf 12}\\
6&9&11&12&
7&9&11&12&
&&&&
&&&&
&&&\\
\hline
  \end{tabular}}
 \caption{A nucleus with 12 nodes, and 37
tetrahedra, of which 17 have no external face.
If a tetrahedron has an external face, its 3 nodes are shown in boldface.
}\label{table1}
\end{center}
 \end{table}

Our approach is two-fold: Top-down, and bottom-up. In the top-down
approach, we define a set of elementary moves which reduces an
arbitrary triangulation of the 3-ball into a tree of nuclei, which are
glued together by pairs of faces, each such face with 3 external
edges. The tree can then be cut into a disjoint union of nuclei by
cutting along these faces. The construction always transforms 3-balls
to unions of 3-balls, and is thus implementable on a computer.

In the bottom-up approach, we start with any tree whose nodes are
arbitrary nuclei, and we construct 3-balls from it by gluing adequate
faces together. Not all possible gluings lead to 3-balls, but
including also some inadequate gluings still leads to good bounds.
Again, the procedure can be programmed on a computer.

Our main result is \tref{t:main}. It says that \emph{if the number
$\rho(t,f_\s)$ of face-rooted nuclei with $t$ tetrahedra and $f_\s$ external
faces has a bound of the form $\rho(t,f_\s)\le C^{t}$ then
the number of rooted triangulations of the 3-ball with $t$
tetrahedra, $f$ external faces and $n$ internal nodes is bounded by
$C_*^{t+f+n}$.}

In particular, since obviously, $f\le 4t$ and $n\le 4t$, we would get a
bound $C_{**}^{t}$.

In summary, our work bounds the number of triangulations in terms of
the number of nuclei. Thus, we remain with a new, but hopefully
simpler, open question about the problem posed by Gromov, namely 
does the number of face-rooted nuclei with $t$ tetrahedra have an
exponential bound in $t$\,?
 While we do
not have any mathematical statements about this problem, the
methodology of the proof of \tref{t:main} allows for quite extensive
numerical experimentation. The most important insight from this
experimentation is as follows: It seems that if $T$ is a nucleus with
a $k$-complicated knot (or even braid), then the addition of (at most) $k$ cones
and decomposition with our algorithm leads to a tree of
\emph{tetrahedra}.
Note that the trefoil knot is 1-complicated. Furthermore, Goodrick \cite{Goodrick1968} showed 
that the connected sum of $k$ trefoil knots is at least $k$-complicated.

We have analyzed a certain number of classical examples, with the
following findings summarized in Table~\ref{tab:tab2}.
\begin{table}[h]
  \begin{center}
    \scalebox{1.0}{
\begin{tabular}{|l|l|l|c|}
\hline
Example&knot complication&\# of cones added&ref.\\
\hline
Bing 2 room& no knot& 1 cone&\cite{Bing1964}\\
1 trefoil& 1-complicated& 1 cone&\cite{Furch1924}\\
2 trefoils& 2-complicated& 1 cones&\\
3 trefoils & 3-complicated& 2 cones&\cite[Figure 3]{BZ2011}\\
4 trefoils & 4-complicated& 3 cones&\\
5 trefoils & 5-complicated& 3 cones&\\
figure 8 & 1-complicated &1 cones&\\
cinquefoil knot & 1-complicated & 1 cones\\
\hline
\end{tabular}}
  \end{center}
\caption{Experimental upper bound on the number of cones needed to
  decompose a triangulation into 
  tetrahedra (For the definition of $m$-complicated, see \cite{BZ2011}).}\label{tab:tab2}
\end{table}

\subsection{The method}The bounds on the number of triangulations are
obtained by studying a set of elementary moves, detailed in
\sref{s:elementary}. These moves either decompose the triangulation in
two disjoint pieces (by cutting along an interior face with 3 edges on
the boundary, or taking away a tetrahedron with an external face and
one internal node). Clearly, this leaves again two 3-balls on which we
continue the decomposition. The other operations are ``open'' a ball along
a carefully chosen edge (which we call ``split-a-node-along-a-path'')
or opening one face with 2 external edges. These operations
\emph{increase} the number of tetrahedra in the triangulation, but they
prepare the moves in which the 3-ball can be cut, and the internal
nodes can be eliminated. One of the main novelties of this
construction is the observation that this can be done with \emph{few}
additional tetrahedra: This follows from a careful analysis of cuts of
the 2-dimensional hemisphere attached to any external node. Since this
is an important bound, we devote \sref{s:twocolor} to its proof.
In \sref{s:notation}, we introduce the
(standard) terminology for the pieces of any triangulation. In
\sref{s:part1} we combine the 4 moves described above to show how a
general triangulation can be decomposed into a set of nuclei. In
\sref{s:part2}, we perform the bottom-up procedure and show how one
bounds the number of triangulations of the 3-ball in terms of trees
whose nodes are (rooted) nuclei, extending in this way the earlier
work of \cite{DJ1995} and \cite{BZ2011}.

\subsection{Comparison with 2d}
It is useful to compare our method to what can be done in 2d.
In 2d we have a set of triangles. Any triangulation can be obtained in
the following way: First, construct a tree of triangles, adding each
triangle with only one face to the existing tree. This object has no
internal nodes. Now, glue together adjacent faces of the tree,
recursively. In this way one can obtain all triangulations of any polygon.

The inverse operation, while intuitively clear, is a little harder to
describe, and we just sketch the procedure. Given any internal node
$x$ at
distance 1 from the polygon, say connected to $n$ we can split the
edge $(n,x)$ by doubling the node $n$ into a pair $n'$, $n''$, so that
the edges $(n',x)$ and $(x,n'')$ are now external edges and $x$ is
promoted to an external node. All internal nodes can recursively be
brought to the surface in this way. We then have a tree, and the tree
can be decomposed into triangles by cutting all internal edges with 2
external nodes. At the end, the basic objects are triangles.

Clearly, therefore, the basic objects in 2d are
\begin{myenum}
  \item[2a)]internal edges with 2 external nodes
\item[2b)] internal nodes (at distance 1) from the polygonal boundary 
\end{myenum}

In 3d, there are many more possibilities, and our procedure will
eliminate all those which can be eliminated. The ones which we can
deal with are
\begin{myenum}
  \item[3a)] internal faces with 3 external edges: this corresponds to case
    2a) above and will be cut by cut-a-3-face
\item[3b)] internal faces with 2 external edges, and therefore one internal
  edge with 2 external nodes. This resembles 2b) and is dealt with by
  open-a-2-face.
\item[3c)] an internal node $x$ which is the tip of a tetrahedron $t$ whose
  opposite face is external. One can just eliminate $t$
  and $x$ becomes external. This is the second case which corresponds
  to 2b). We call this C0 later.
\item[3d)] an internal node $x$ which is the corner of a face $f$ whose
  opposite edge is external (but not C0). Again, a sub-case of
  2b). This is dealt with split-a-node-along-a-path, and will be
  called C1.
\item[3e)] an internal node $x$ which is the end of an edge $e$ whose
  opposite end is external (but not C1). Again, a sub-case of
  2b). This will be called C2 and reduced to C1 with
  split-a-node-along-a-path.
\end{myenum}

The elementary objects are those left over after all these
decompositions are performed. In 2d, those objects are just triangles, which makes
the counting possible. In 3d these are nuclei. Non-trivial nuclei
exist, and they must carry the information about the
complications of 3 dimensional topology, since all the other problems
have been eliminated. In particular, internal faces of nuclei have 0 or 1 external edges.

\section{General definitions and notations}\label{s:notation}

\subsection{Internal and external objects, flowers}\label{s:objects}

To be precise, we redefine here some terminology which is common in
the discussion of triangulations. We start with triangulations of
$S^2$. These will have $f_\s$ faces, $n_\s$ nodes and $e_\s$ edges,
where the subscript s stands for ``surface''. This triangulation is the
boundary of a ball which is filled with tetrahedra, some of which have
faces among the $f_\s$ external faces. We call this also a triangulation, and
we say that $t$ is the number of tetrahedra, $f_{\rm tot}$ the number of faces,
$e_{\rm tot}$ the number of edges, and $n_{\rm tot}$ the number of nodes.
Faces, edges,
and nodes which are not among those of the triangulation of $S^2$ are
called \emx{internal}. It will be useful to observe that tetrahedra can have
up to 4 external faces, internal faces can
have up to 3 external edges, internal edges up to 2 external nodes.
We will use the subscript i for internal objects.

Obviously,
\begin{equ}
  f_{\rm tot}=f_\s+f_\i~,\qquad
  e_{\rm tot}=e_\s+e_\i~,\qquad
  n_{\rm tot}=n_\s+n_\i~.
\end{equ}
{}From the Euler relations and trivial geometry, we have the relations
\begin{equa}[e:euler]
  t-f_{\rm tot}+e_{\rm tot}-n_{\rm tot} &=-1~,\\
f_\s -e_\s +n_\s &=2~,\\
3f_\s&=2e_\s~,\\
4t&=2(f_{\rm tot}-f_\s)+f_\s~.
\end{equa}
This leaves us with 3 free variables, which we choose as
\begin{equa}
  t, f_\s , \text{and~} n_\i~.
\end{equa}
Note that $f_\s$ is always even.
\begin{definition}
  We use the term \emx{f-vector} for the three variables $\langle
t,f_\s,n_\i\rangle$ where $f_\s \ge 4$.
\end{definition}

\subsection {Notation and flowers}
We introduce some notation which we apply to triangulations and
tetrahedrizations (which we also call triangulations when no confusion
is possible):
\begin{myitem}
 \item If $n_1$ and $n_2$ are 2 distinct nodes, then we denote by
$(n_1,n_2)$ the edge connecting the two.
 \item Similarly, if $n_i:i=1,2,3$ are 3 distinct nodes, then
$(n_1,n_2,n_3)$ is the face (triangle) with these 3 corners.
\item If $e$ is an edge and $n$ is a node not in $e$ then $(n,e)$
  denotes the face (triangle) with the edge $e$ and the node $n$.
\end{myitem}
This notation is easily generalized to the case where we consider simplices
of dimension 3:
\begin{myitem}
 \item If $n$ is a node and $f$ is a face not containing $n$, then $(n,f)$ is
the tetrahedron with $f$ as a face and $n$ as
the opposite corner.
 \item Similarly, if $e$ is an edge and $n_1,n_2 \notin e$ are 2 distinct nodes
then $(n_1,n_2,e)$ is the unique tetrahedron containing all of
them.
 \item Finally, if $e_1$ and $e_2$ are two edges without common nodes,
   then $(e_1,e_2)$
is the tetrahedron containing both edges.
\end{myitem}
Paths of nodes connected by edges will be denoted as $\gamma =
[n_1,n_2,\dots,n_k]$ and the union of 2 disjoint paths $\gamma_1$,
$\gamma_2$ (connected by
one or both
endpoints) will be denoted by  $\gamma_1\cup\gamma_2$.\newline

We next define what we mean by \emx{flowers}. Here, we adapt the conventions
to the tetrahedrization of  a triangulated sphere $S^2$. Nodes,
edges, and faces are called \emx{external} if they lie entirely in
$S^2$.
All others are called \emx{internal}.
Consider an external node $n_*$.\footnote{We use $n_*,m_*$ and the
  like for external nodes, and $x_*,y_*,\dots$ for internal ones.} We
define its 2 flowers:
\begin{myitem}
 \item The \emx{external flower} $\EE(n_*)$ of $n_*$ is the set of all edges
$e$ not containing $ n_*$ for which $(n_*,e)$ is an external face.
Clearly, $\EE(n_*)$ is
a polygon.
 \item The \emx{internal hemisphere} $\II(n_*)$ of $n_*$ is the set of all
faces
$f$ not containing $n_*$ for which $(n_*,f)$ is a tetrahedron. It is
   easy to see that $\II(n_*)$ is a
2d triangulation
whose boundary is the polygon $\EE(n_*)$.
\end{myitem}

We will say that the external flower of an \emph{internal} node $x_*$ is empty.
The internal (hemi-)sphere $\II(x_*)$ (or simply flower) of $x_*$
is a triangulation of $S^2$.

We also define the \emph{external flower} $\EE(e)$ \emph{of an external edge}
$e$ as the 2 nodes $n_1,n_2$ for which $(n_i,e)$ are 2 external faces.
Similarly, the \emph{internal hemisphere} $\II(e)$ \emph{of the external edge}
$e$ is defined
as the set of all edges $e'$ such that $(e,e')$ is a tetrahedron. By
hypothesis,
$\II(e)$ is a 1-d triangulation whose boundary is $\EE(e)$.
Note that there might be internal nodes at distance 1 from $e$ which
are not in $\II(e)$.

\section{Some geometrical considerations: Two-colored paths in a
triangulation}\label{s:twocolor}

We describe here properties of paths in a 2d triangulation of a
polygon. These properties will play a crucial role when we will bound
the effects of moving internal nodes of a 3d triangulation to the
surface. However, they are totally independent of the remainder of the
paper.

\begin{lemma}\label{l:euler2d}
 Let $\KK$ be a 2d triangulation of a $p$-gon $P$ with $n$ interior
 nodes. Then the
number of interior edges in $\KK$ is $3n+p-3$.
\end{lemma}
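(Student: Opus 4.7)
\medskip

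\noindent\textbf{Proof proposal.} The plan is a direct application of Euler's formula together with the face--edge incidence count, treating the triangulation $\KK$ as a planar graph with the unbounded region as an extra face.

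First I would set up the bookkeeping. Let $T$ denote the number of triangles of $\KK$, $V$ the total number of nodes, $E$ the total number of edges, and $F$ the total number of faces (including the outer face). By hypothesis the boundary of $\KK$ is a $p$-gon, so there are exactly $p$ boundary nodes and $p$ boundary edges; writing $E_\i$ for the number of interior edges, one has
\begin{equation*}
V = p + n, \qquad E = E_\i + p, \qquad F = T + 1.
\end{equation*}

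Next I would use the two standard relations. Each triangle has three edges, each interior edge is shared by two triangles, and each boundary edge belongs to a single triangle, giving
\begin{equation*}
3T = 2 E_\i + p.
\end{equation*}
Euler's formula $V - E + F = 2$ reads
\begin{equation*}
(p+n) - (E_\i + p) + (T+1) = 2,
\end{equation*}
i.e.\ $T = 1 + E_\i - n$. Substituting this into $3T = 2 E_\i + p$ yields
\begin{equation*}
3 + 3 E_\i - 3n = 2 E_\i + p,
\end{equation*}
and therefore $E_\i = 3n + p - 3$, which is the claim.

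There is essentially no obstacle here: the statement is a linear consequence of Euler's relation and the standard face--edge double count. The only mild point to be careful about is the treatment of the outer face (so as not to miscount it among the triangles) and the fact that the polygon boundary contributes exactly $p$ edges and $p$ nodes to the totals; both are immediate from the hypothesis that $\KK$ triangulates a $p$-gon.
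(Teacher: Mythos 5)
Your proof is correct and takes exactly the route the paper intends: the paper's own proof is simply ``follows from the Euler relations and is left to the reader,'' and your computation is the standard combination of Euler's formula $V-E+F=2$ with the incidence count $3T = 2E_\i + p$. Nothing to add.
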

\begin{proof}
  The proof follows from the Euler relations and is left to the reader.
\end{proof}
\begin{lemma}\label{l:3connected}
 Consider a polygon $P$ and let $\KK$ be any triangulation of $P$ with $k>0$
internal nodes. For each node $x \in \KK \setminus P$, there are at least
 3 simple disjoint paths in the interior of $\KK$ connecting it to 3
 different points of $P$.
\end{lemma}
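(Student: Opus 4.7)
The plan is to reduce the claim to Menger's theorem via the standard trick of compactifying the polygon. I cap $P$ by introducing a single new vertex $\infty$ in the region exterior to $P$ and connect it by an edge to every vertex of $P$; the exterior region becomes a fan of $p$ triangles sharing the apex $\infty$, so the resulting complex $\hat\KK$ is a triangulation of $S^2$ with $p+k+1\ge 5$ vertices.

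The key auxiliary fact I need is that the $1$-skeleton of any simplicial triangulation of $S^2$ on at least $4$ vertices is $3$-vertex-connected. This is classical: if $\{u,v\}$ were a $2$-separator of $\hat\KK$, then in the spherical embedding the open triangles not touching $u$ or $v$ would split into two or more components, yet the dual graph on these triangles is connected because any two adjacent triangles of $\hat\KK$ share an edge whose endpoints cannot both be forced into a single pair $\{u,v\}$. Given $3$-connectedness of $\hat\KK$, Menger's theorem supplies three internally vertex-disjoint paths $\Pi_1,\Pi_2,\Pi_3$ from the internal node $x$ to $\infty$.

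I then post-process: for each $j$, let $w_j$ be the first vertex of $P$ encountered when traversing $\Pi_j$ starting from $x$. Such a vertex exists because the vertex of $\Pi_j$ immediately preceding $\infty$ is a neighbor of $\infty$, hence a vertex of $P$. Truncating $\Pi_j$ at $w_j$ gives a simple path from $x$ to $w_j$ whose interior vertices lie strictly inside $\KK$ (they are neither on $P$ nor equal to $\infty$, by choice of $w_j$); the truncated edges are likewise interior, since an edge joining an interior vertex to a boundary vertex is not a boundary edge of $P$. The three endpoints $w_j$ are distinct, because they are interior vertices of the pairwise internally disjoint paths $\Pi_j$, and the truncations inherit pairwise vertex-disjointness. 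This yields the three required paths. The main obstacle in this plan is cleanly invoking the $3$-connectedness of the capped sphere $\hat\KK$; once that is in hand, Menger together with the first-hit truncation closes the proof with essentially no calculation.
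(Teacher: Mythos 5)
Your proof takes exactly the same route as the paper's: cap the disk by a single apex coned over $\partial\KK$, observe that the resulting triangulation of $S^2$ is $3$-connected, apply Menger's theorem to obtain three internally disjoint $x$--$\infty$ paths, and truncate each at its first hit of $P$. The paper states this more tersely (citing Diestel and taking $3$-connectedness of $S^2$-triangulations as known), while you spell out the truncation bookkeeping and sketch the $3$-connectedness argument, but the underlying idea and every step are identical.
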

\begin{proof}
 Any triangulation of $S^2$ is 3-connected. Complete $\KK$ into a
triangulation of $S^2$ by adding a cone over its boundary. Let $m$ be the apex
 of the cone. Then there are at least 3 disjoint simple paths connecting $x$ to
$m$, \cite{Diestel2010}. Any such path must intersect $P$, and we take the
 first intersection point.
\end{proof}

We assume now that the
nodes of $P$ are labeled.

\begin{definition}\label{def:admissible}
  A triangulation $\KK$ is called \emx{admissible} if the following
  conditions are met:
  \begin{myitem}
    \item[K1:]The boundary $\partial\KK$ has at least 2 different
      labels.
      \item[K2:]The nodes with a given label form one connected arc of
        $\partial\KK$ .
      \item[K3:]The ends of any edge connecting 2 nodes of
        $\partial\KK$ have different labels, unless the edge is in
$\partial\KK$~.
  \end{myitem}
\end{definition}

The \fref{fig:konditions} illustrates the definition.
\begin{figure}[h]
  \includegraphics[width=\textwidth]{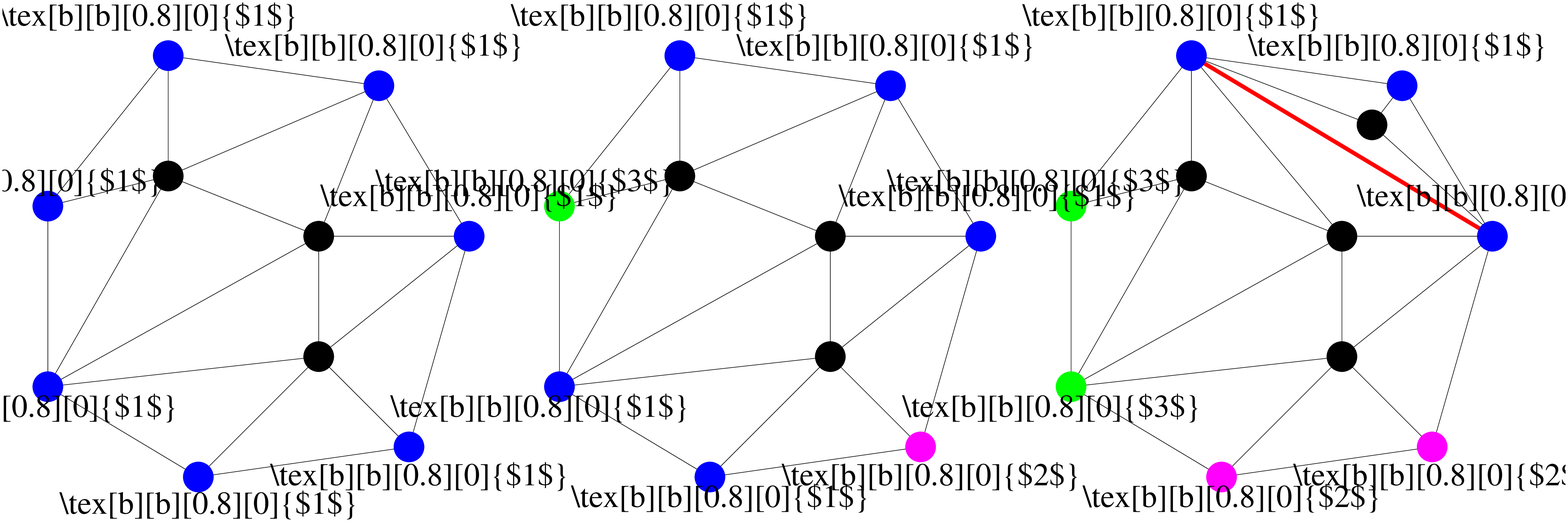}
  \caption{An illustration of the conditions K1)--K3). Left: since
    there is only one label, K1) is violated. Center: The region with
    label 1 is not connected; K2) is violated. Right: There is an
    internal link (red) connecting two nodes with the same label; K3)
    is violated.}
\label{fig:konditions}
\end{figure}

We first need an auxiliary lemma.
We will need the following information:
\begin{lemma}\label{l:crisscross}
Let $\KK$ be as above and let $P=\partial\KK$.
Given two boundary nodes $a$ and $b$ with
  different labels at least one of the two alternatives below holds:
  \begin{myitem}
\item[1)]There is a simple path $\gamma$
  joining $a$ and $b$ without any other node in $P$,
\item[2)]
There is an  edge $(x,y)$ joining the two pieces of $P\setminus\{a,b\}$.
  \end{myitem}
\end{lemma}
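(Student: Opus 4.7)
The plan is to argue by contradiction: assume that neither~(1) nor~(2) holds, and from the structure of the component of $a$ in a suitable auxiliary graph extract an edge joining $P_1$ to $P_2$. Let $K$ be the subgraph of $\KK$ induced by the interior nodes together with $\{a,b\}$; a path as in~(1) is exactly a path of $K$ from $a$ to $b$. Since~(1) fails, $a$ and $b$ lie in different components of $K$; let $C_a$ be the component of $a$, and let $R_a\subseteq\KK$ be the union of all closed triangles having at least one vertex in $C_a$.

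The first step is a structural claim: every vertex of every triangle of $R_a$ lies in $C_a\cup P_1\cup P_2$. Indeed, if $T$ is such a triangle, with a vertex $c\in C_a$, then any other vertex $u$ of $T$ is either a boundary node of $\KK$ (hence in $\{a,b\}\cup P_1\cup P_2$) or an interior node. In the interior case, $(c,u)$ is an edge of $K$, so $u$ belongs to the $K$-component $C_a$ of $c$. The boundary case $u=a$ is trivial since $a\in C_a$, the case $u\in P_1\cup P_2$ is the desired conclusion, and the case $u=b$ would make $(c,b)$ an edge of $K$, forcing $b\in C_a$ and contradicting the failure of~(1). In particular $b$ is not a vertex of any triangle of $R_a$, so $b\notin R_a$.

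Next I would study the component $W$ of the open complement $\KK\setminus R_a$ that contains $b$ and extract from $\partial W$ a simple arc $\Pi$ in the $1$-skeleton of $\KK$ whose edges are interior edges of $\KK$ and whose endpoints $p_1,p_2$ lie on $\partial\KK$. This arc separates $a$ from $b$ inside the disk. Because the fan of triangles around $a$ lies entirely in $R_a$, the point $a$ is interior to $R_a$, so $p_1,p_2\neq a$; and $p_1,p_2\neq b$ since $b$ is interior to $W$. Since $\Pi$ separates $a$ from $b$, after relabeling $p_1\in P_1$ and $p_2\in P_2$. The decisive local step is then: every edge $e=(u,v)$ of $\Pi$ has $u,v\in P_1\cup P_2$. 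Indeed, $e$ is an interior edge of $\KK$ with two incident triangles $T,T'$; being on $\partial R_a$, exactly one of them (say $T$) belongs to $R_a$. The other, $T'$, has no vertex in $C_a$, so $u,v\notin C_a$; but $u,v$ are vertices of $T\in R_a$, and so by the first step they lie in $C_a\cup P_1\cup P_2$, giving $u,v\in P_1\cup P_2$. Traversing $\Pi$ from $p_1\in P_1$ to $p_2\in P_2$ through vertices entirely in $P_1\cup P_2$, some consecutive pair must cross from $P_1$ to $P_2$, producing the edge required by~(2) and giving the contradiction.

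The step I expect to require most care is the topological extraction of the arc $\Pi$: if $R_a$ has holes, or if $W$ meets $\partial\KK$ in several arcs, one must take the boundary component of $W$ adjacent to $b$ and isolate a simple sub-arc connecting the $P_1$ and $P_2$ sides of $\{a,b\}$. Once that arc is in hand, the concluding argument is a purely local verification at each edge of $\Pi$, and one can avoid any reference to labels beyond the fact that $a,b$ are distinct nodes of $P$.
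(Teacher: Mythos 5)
Your argument is correct in outline and takes a genuinely different route from the paper. The paper's proof is short and relies on Menger's theorem: if (1) fails then at most two internally disjoint $a$–$b$ paths exist (since the middle of any three would be interior), so Menger yields a 2-cut $\{x,y\}$ of the triangulation with $x,y\neq a,b$; since the two boundary arcs are themselves $a$–$b$ paths, $x$ and $y$ lie one on each arc, and then a local look at the flower of $x$ shows $y$ must be adjacent to $x$. Your proof instead builds the separating structure by hand: you form the component $C_a$ of $a$ in the induced ``interior'' graph $K$, take its closed star $R_a$, and observe that every triangle of $R_a$ has all its vertices in $C_a\cup P_1\cup P_2$, so every interior edge on the boundary of $R_a$ has both ends in $P_1\cup P_2$; since $a$ is interior to $R_a$ and $b$ is outside it, a boundary arc must cross from $P_1$ to $P_2$, producing the edge of alternative (2). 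This is essentially a direct proof of the needed fragment of Menger for this specific planar situation. What the paper buys is brevity (one citation to Menger plus a two-line local argument); what you buy is a self-contained, constructive, and slightly more general argument (you never use the labels, only that $a$ and $b$ are distinct boundary nodes).

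The one place that needs to be tightened is the one you yourself flag: the extraction of the arc $\Pi$. The set $R_a$ is connected (it is the union of the closed stars of the nodes of the connected set $C_a$), so if one passes to $S^2$, the component $W$ of $S^2\setminus R_a$ containing $b$ is an open disk and $\partial W$ is a single Jordan curve; but $\partial W\cap\operatorname{int}\KK$ may consist of \emph{several} arcs, and it is not automatic that the one you pick separates $a$ from $b$. Two clean ways to close this: (i) invoke the Jordan-curve fact that since $\partial W$ separates $a$ from $b$ and both lie on $\partial\KK$, one of its chordal arcs has its two endpoints on opposite sides of $\{a,b\}$; or (ii) avoid the arc altogether with a parity argument: the set $E_0$ of interior edges with exactly one incident triangle in $R_a$, together with the boundary edges whose triangle is outside $R_a$, is an even subgraph; $a$ has degree $0$ in it and $b$ has degree exactly $2$ (its two boundary edges are outside $R_a$, and no $E_0$ edge meets $b$ since all ends of $E_0$ edges lie in $P_1\cup P_2$), so deleting $b$ leaves its two neighbours $p\in P_1$ and $q\in P_2$ in the same component, and the connecting path, having all vertices in $P_1\cup P_2$, must contain an $E_0$ edge crossing from $P_1$ to $P_2$. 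Either fix makes the proof complete.
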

  \begin{figure}
\begin{center}
     \includegraphics[width=0.8\textwidth]{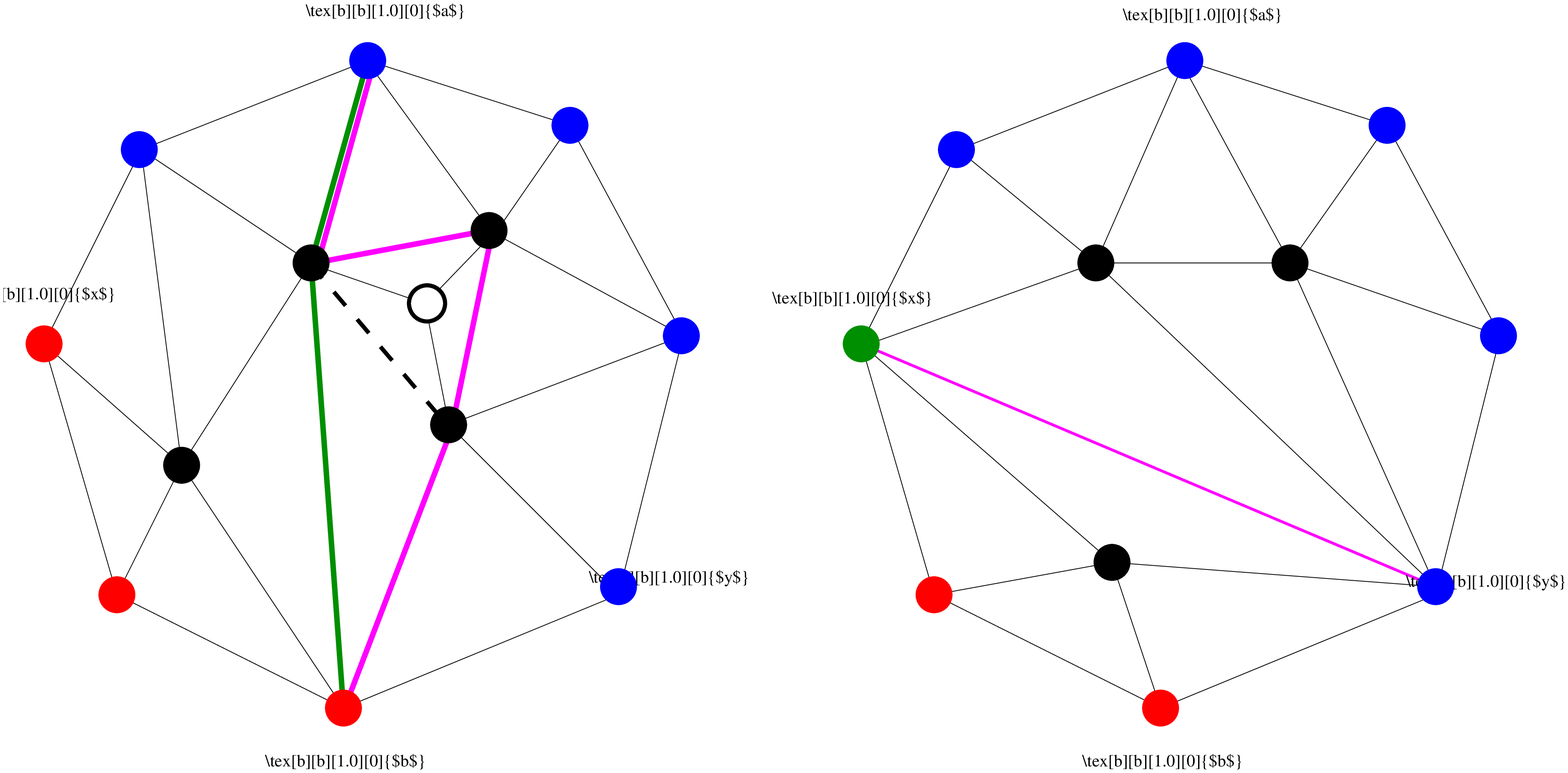}
\end{center}
    \caption{The 2 alternatives of finding a path connecting two
      different labels. Left: There is an interior path between $a$
      and $b$. Right: There is no such path, but then, one can always
      find an edge connecting two different labels (by K3), (not necessarily
      the same as $a$ and $b$). The left panel also illustrates the
      necessity of choosing a shortest path. For example, choosing the
      magenta path, the dashed edge will violate K3) in the next step
      of the procedure.}\label{fig:crisscross}

  \end{figure}

Postponing the proof of \lref{l:crisscross} we have

\begin{proposition}\label{prop:twocolor}
  Assume $\KK$ is an admissible triangulation in the sense of
  \dref{def:admissible} with at least 2 triangles.
Then, there exists a path $\gamma$ along
  internal edges of $\KK$ which connects two points in $P=\partial\KK$
  with \emph{different labels}. It cuts $\KK$ in two
  pieces $\KK_\L$ and $\KK_\R$. The path $\gamma$ can be chosen in
  such a way that labeling the new boundary piece (namely the
  interior nodes of $\gamma$) in
  $\KK_\L$ and $\KK_\R$ with a label different from the ones used so far, both
  $\KK_\L$ and $\KK_\R$ are
again admissible.
\end{proposition}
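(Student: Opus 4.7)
The plan is to invoke \lref{l:crisscross} to exhibit an interior path between two boundary nodes of distinct labels, then refine the choice to a shortest such path, and finally verify K1--K3 for each of the two resulting pieces. The minimality of the chosen path is what will drive the verification of K3.

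First, by K1 I may pick two boundary nodes $a,b$ with different labels. Applying \lref{l:crisscross} to this pair, I obtain one of two outcomes: either a simple interior path $\gamma$ from $a$ to $b$ with no other boundary nodes on it, or an interior edge $(x,y)$ joining the two arcs of $P\setminus\{a,b\}$. In the second case, the edge $(x,y)$ is by construction not on $\partial\KK$, so K3 forces $x$ and $y$ to carry different labels, and that edge itself is a length-one interior path between distinct labels. In either outcome I have an interior path whose endpoints are boundary nodes of distinct labels and all of whose intermediate nodes lie in the interior of $\KK$.

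Next, over all such paths (ranging over every pair of boundary endpoints with distinct labels), I would select $\gamma$ to have the minimum number of edges. Cutting $\KK$ along $\gamma$ produces two subcomplexes $\KK_\L$ and $\KK_\R$, each a triangulated polygon whose boundary is one of the two arcs of $P$ determined by $a,b$ together with a copy of $\gamma$. I assign every interior node of $\gamma$ a fresh label $\nu$ not used anywhere in $\KK$, while $a$ and $b$ retain their original labels. The verifications of K1 and K2 are then routine: if $\gamma$ has length $\ge 2$ the interior of $\gamma$ contributes the label $\nu$ on top of the old labels on the boundary arc, and if $\gamma$ has length $1$ the distinct labels of $a$ and $b$ already give K1; for K2, each old label arc of $P$ is cut only at $a$ or $b$ into (possibly empty) sub-arcs that remain contiguous in the new boundaries, while the interior nodes of $\gamma$ all carry $\nu$ and form one connected arc of the cut.

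The main obstacle, and the step that constrains the construction, is K3 for the two pieces. Edges connecting two old boundary nodes inherit K3 directly from $\KK$; edges connecting an old boundary node to an interior node of $\gamma$ trivially satisfy K3 since $\nu$ is distinct from every old label. The delicate case is an interior edge of $\KK_\L$ (or $\KK_\R$) joining two nodes of $\gamma$, both now carrying $\nu$. Here I rely on minimality: any such chord that is not an edge of $\gamma$ would let me shortcut the sub-path of $\gamma$ between its endpoints, producing a strictly shorter interior path between two boundary nodes of distinct labels, contradicting the choice of $\gamma$. Hence no such forbidden chord exists, K3 holds in both pieces, and the proposition follows by applying the same argument to $\KK_\L$ and $\KK_\R$ symmetrically.
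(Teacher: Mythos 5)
Your proof is correct and follows essentially the same route as the paper's: invoke Lemma~\ref{l:crisscross} to obtain an interior path between boundary nodes of distinct labels, pass to a shortest such path, and use minimality to rule out the chords that would violate K3 in the two pieces. The only (cosmetic) difference is that you minimize over all pairs of distinctly-labeled boundary endpoints rather than over paths joining a fixed pair $a,b$, which changes nothing in the argument.
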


\begin{proof}
    Let $P=\partial\KK$. By admissibility, we know that not all nodes on $P$
  have the same label. Take nodes $a$ and $b$ with different labels
  and apply \lref{l:crisscross}. If 2) holds, then we take $\gamma $
  as the edge
  connecting $x$ and $y$. By K3) they have different
  labels. Otherwise, there is an interior path connecting $a$ and
  $b$. We take a shortest path, $\gamma$.

Cutting along the path $\gamma$, we obtain 2 pieces $\KK_\L$ and
$\KK_\R$. If $\gamma$ is just one edge then inspection shows that
K1)--K3) hold. In the second case, K1) and K2) are obviously true by
construction. Giving a new label, say $L$, to the interior nodes of
$\gamma$, we have to show that there are no edges connecting any two
non-consecutive nodes with label $L$. But if there were, the path
would not be minimal.
\end{proof}

\begin{proof}[Proof of \lref{l:crisscross}] The reader may want to look at
\fref{fig:crisscross}.
  Assume 1) does not hold.
This means that one cannot draw 3 disjoint paths between $a$ and $b$, as
the middle one would satisfy 1). We can take the two disjoint paths to go
along the two boundary segments between $a$ and $b$. By Menger's
theorem
\cite{Diestel2010} there must then be 2 nodes $x$ and $y$ (other than $a$ or $b$) such
that all paths
from
$a$ to $b$ must pass through at least one of them.
Since the boundary paths are candidates, we see that $x$ and $y$
are in $P$, one per arc connecting $a$ and $b$.
Consider now the path from $a$ to $b$ along $P$ which goes through $x$.
Modify it so that instead of going through $x$ it goes through the flower of
$x$. We get a new path from $a$ to $b$ which does not go through $x$. This
means that the new path goes through $y$ implying that $y$ is in the flower of
$x$. Thus, $x$ and $y$ are connected by an edge.

This completes the proof.
\end{proof}

\section{Part I: Reducing any triangulation into a set of
nuclei}\label{s:part1}

\subsection{The elementary moves}\label{s:elementary}

In this section we define the elementary moves which
transform any
triangulation into a (set of) nuclei.
The first two moves, which we call \emx{open-a-2-face} and \emx{cut-a-3-face},
are used to transform any triangulation with no internal nodes into
a set of nuclei, and
the third and fourth move, which we call \emx{remove-1-tetra} and
\emx{split-a-node-along-a-path}, are used
to remove all internal nodes of a triangulation.

Henceforth, $T$ will denote a triangulated 3-ball with f-vector
$\langle t,f_\s,n_\i\rangle$.

\subsubsection{Cut-a-3-face}
Let $(n_1,n_2,n_3)$ be an internal face with its 3 edges on the
surface $\partial T$ of $T$. Then, it cuts the 3-ball $T$
into 2 distinct parts. We simply separate these 2 parts and we get 2
``smaller'' 3-balls.
In other words, we know that any triangulation is (uniquely) defined by the
list of all its tetrahedra.
We find the two lists corresponding to the tetrahedra which are on either side
of
the
internal face in question, and we define 2 new 3-balls, each
associated with one
of these 2 lists.
If $\langle t,f,n\rangle $, $\langle t_1,f_1,n_1\rangle $ and $\langle
t_2,f_2,n_2\rangle $ are the f-vectors of the initial ball and the 2 new ones,
then we have
\begin{equs}
 t = t_1+t_2~,\qquad
 f = f_1+f_2-2~, \qquad
 n = n_1+n_2~.
\end{equs}

\subsubsection{Open-a-2-face}

Consider 3 external nodes $n_*,n_1,n_2$ of $T$ which form a triangle
$(n_*,n_1,n_2)$. We assume that $(n_*,n_1,n_2)$ is an internal
face, with $(n_1,n_2)$ an internal edge, and the two other edges external.
Let $\II$ and $\EE$ be the internal and external flower
of the external node $n_*$. As we have already stated, $\II$ is a triangulation
of the polygon $\EE$. By hypothesis, the edge $(n_1,n_2)$ divides $\II$ into 2
distinct sets of faces.
The operation open-a-2-face consists in removing $n_*$ and all tetrahedra
attached to it, replacing it by $n_{*,1}$ and $n_{*,2}$ and attaching
each of these 2 new nodes to all faces of one of the two parts of $\II$, see
\fref{fig:open2}.
\begin{figure}[t]
\begin{center}
   \includegraphics[height=12cm]{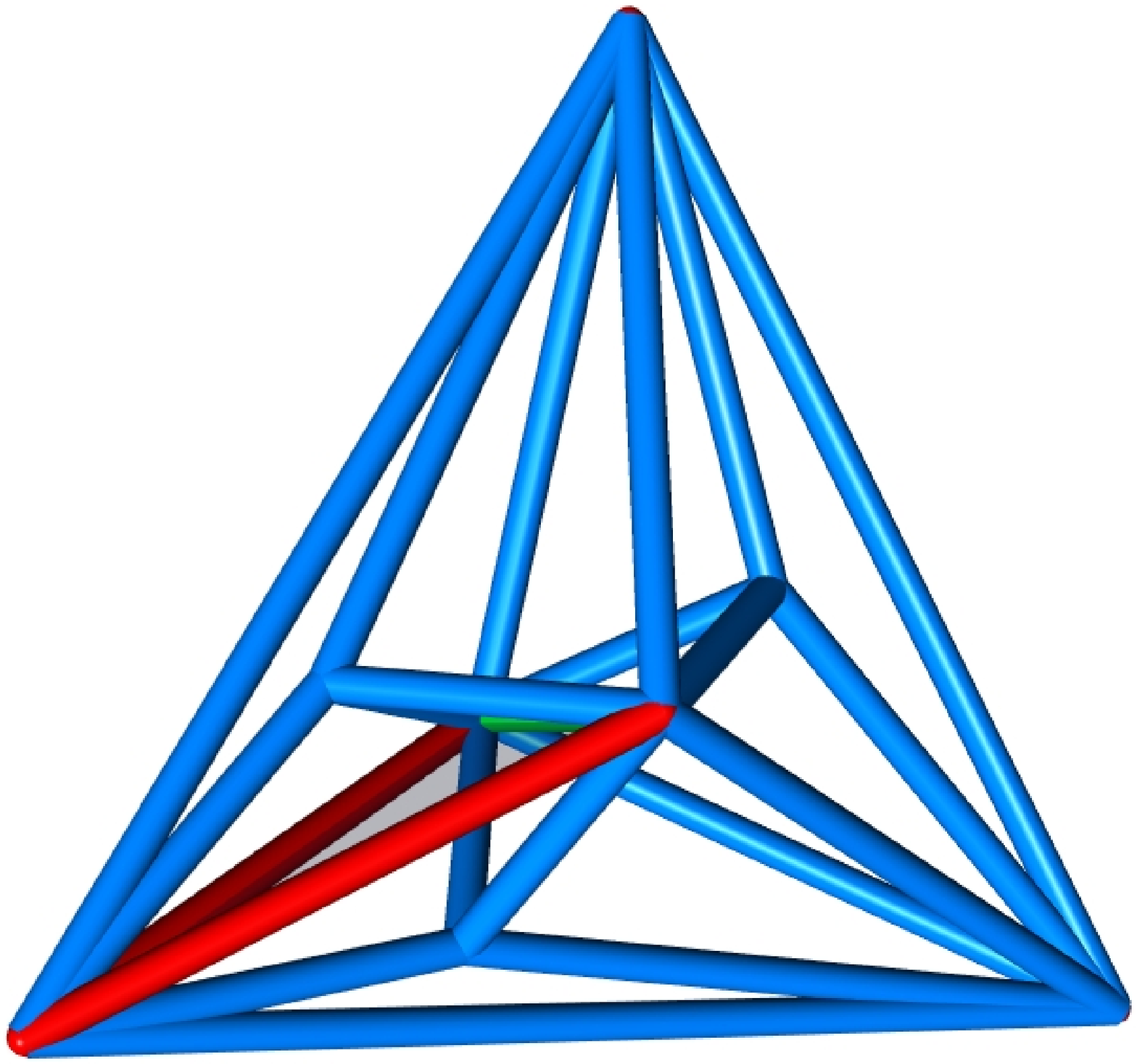}
   \includegraphics[height=12cm]{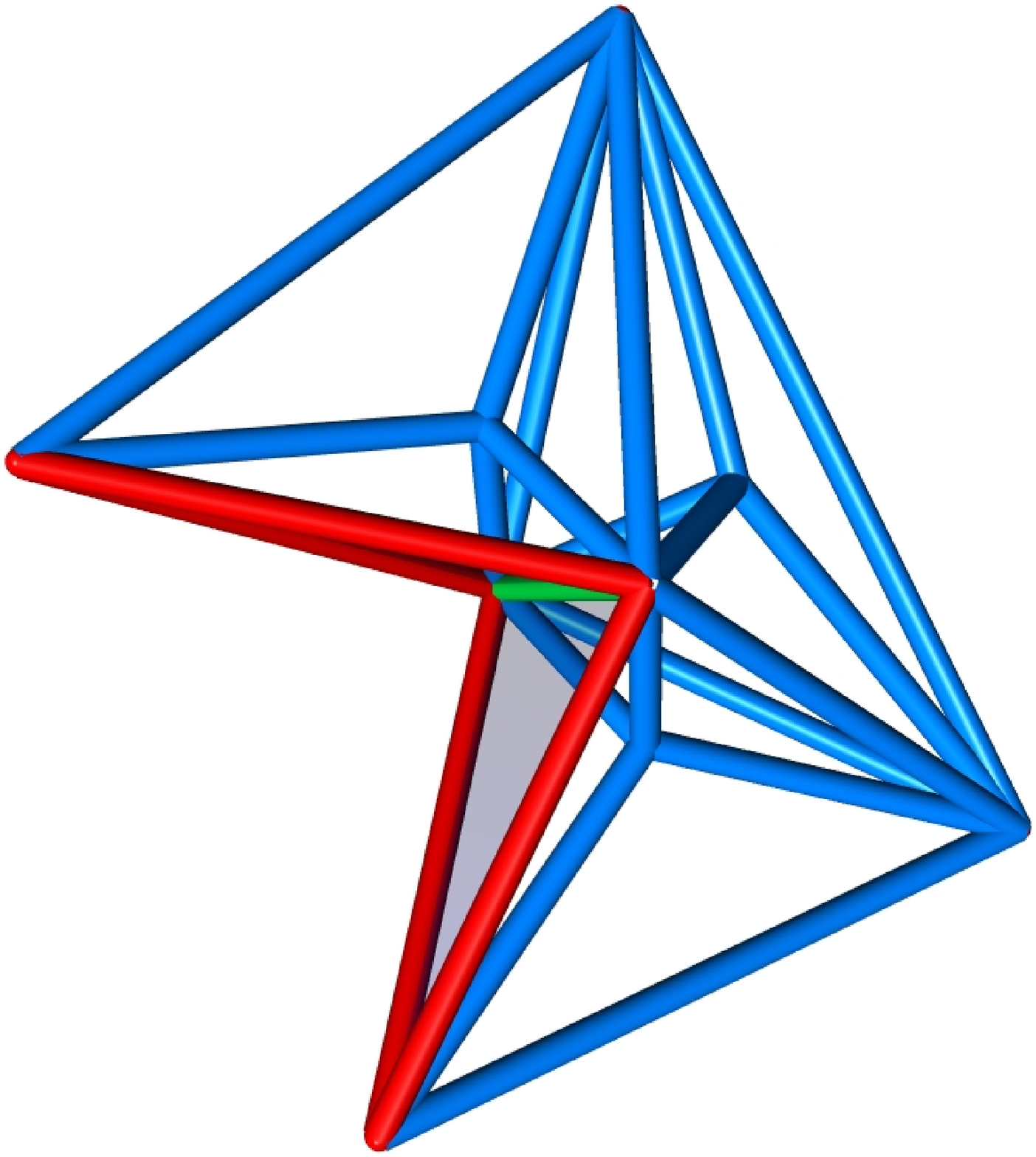}
\end{center}
  \caption{Sketch of open-a-2-face.}\label{fig:open2}
\end{figure}
\emph{This operation transforms a triangulation of the 3-ball into a
  triangulation of the 3-ball.}
If $\langle t,f_\s,n_\i\rangle $ and $\langle t',f_\s',n_\i'\rangle $ are the
f-vectors of
the initial ball and the resulting ball, then we have
\begin{equ}
 t' = t~,\qquad
 f'_\s = f_\s+2~,\qquad
 n'_\i = n_\i~.
\end{equ}
We will say that the f-vector changes by $\langle 0,+2,0\rangle$.

\subsubsection{Remove-1-tetra}\label{s:r1t}
\begin{definition}
 A \emx{removable tetrahedron} is any tetrahedron $t$ with one
 internal node and one external face.
\end{definition}

The operation remove-1-tetra is as
follows:
let $t_*=(x_*,n_1,n_2,n_3)$ be a removable tetrahedron with internal node
$x_*$ and external face $(n_1,n_2,n_3)$.
We simply remove $t_*$ and its external face; the internal node $x_*$, the 3
internal edges and the
3 internal faces of $t_*$ all become external.
The f-vector $\langle t,f_\s,n_\i \rangle$ changes to $\langle
t-1,f_\s+2,n_\i-1
\rangle$; the change of f-vector is $\langle -1,2,-1\rangle$.

\subsubsection {Split-a-node-along-a-path, hemispheres and
pieces}\label{s:def-split}

Consider an external node $n_*$ of $T$ and its internal hemisphere
$\II=\II(n_*)$, see \fref{fig:after-split} for an illustration.
By definition of a triangulation, $\II$ is a 2d triangulation of a polygon.
\begin{definition}
 A \emx{splitting path} $\gamma$ is any simple path which connects two
different points on
 $\partial\II$ and contains no edge of $\partial \II$.
\end{definition}

Let $\gamma$ be a splitting path. Clearly it divides $\II$ into 2 pieces
$\KK_\L$ and
$\KK_\R$ with $\II=\KK_\L\cup\KK_\R$ and
$\KK_\L\cap\KK_\R=\gamma$.

The move
split-a-node-along-a-path $\gamma$ is defined as follows:
\begin{myenum}
  \item Remove the node $n_*$ and all tetrahedra having $n_*$ as a corner
  \item Add 2 new nodes $n_{*,\L}$ and $n_{*,\R}$
  \item For each face $f_*\in \KK_\L$ add the tetrahedron $(n_{*,\L},f_*)$
  \item For each face $f_*\in \KK_\R$ add the tetrahedron $(n_{*,\R},f_*)$
  \item For each edge  $e\in \gamma$ add the tetrahedron
$(n_{*,\L},n_{*,\R},e)$
\end{myenum}

Note that by construction,
 one of the nodes on $\partial \KK_\L$
    is $n_{*,\R}$, and the links in $\KK_\L$ starting from $n_{*,\R}$ reach
    (the image of) $\gamma$.  Analogous statements hold for
    $\KK_\R$.

    \begin{definition}
      In the construction above, we refer to $\KK(n_{*,\L})=\KK_\L$ as the left
\emx{piece}  .
      It is simply the subgraph obtained from the hemisphere $\II(n_{*,\L})$
after removing the cone connecting $n_{*,\R}$ to every node of $\gamma$.
      Similarly, we define the right piece $\KK_\R$.
    \end{definition}

  \begin{remark}
   Hemispheres $\II$ and pieces $\KK$ will play an important role in our
construction. Some statements will be given for hemispheres, others for pieces
and so
   it is important to be able to distinguish between the two definitions.
  \end{remark}
 \begin{remark}
   A splitting path $\gamma$ is always associated with a hemisphere $\II$ and
not with a piece $\KK$. We will see that, under some conditions,
   a simple path $\tilde \gamma$ connecting
   two nodes of the boundary of a piece $\KK$ can be extended into a splitting
path $\gamma$.
  \end{remark}

  \begin{figure}[h]
    \begin{center}
      \includegraphics[width=\textwidth]{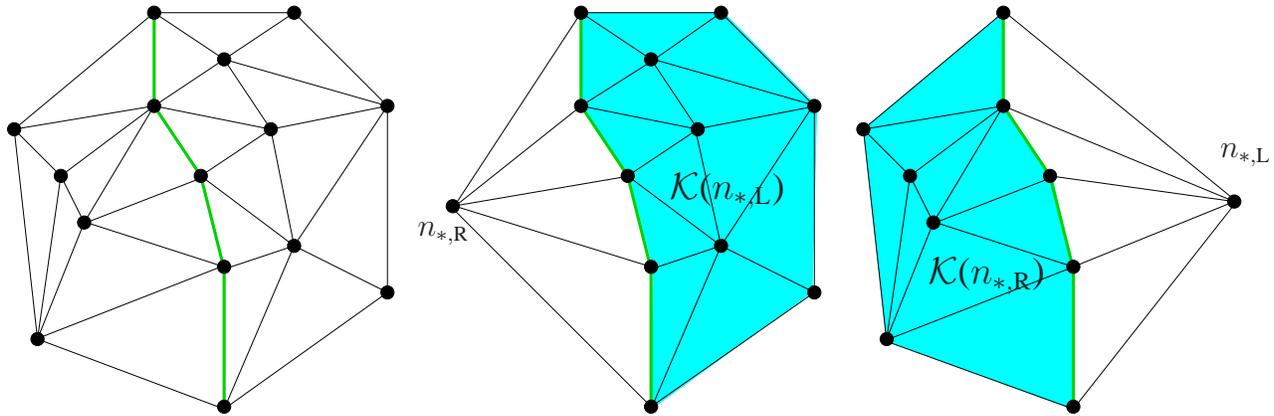}
      \caption{The left panel shows the internal hemisphere $\II(n_*)$ of
$n_*$. We
split $n_*$ into $n_{*,\L}$ and $n_{*,\R}$ along the green path $\gamma$. The
        other 2 panels show the internal hemispheres $\II(n_{*,\L})$ and
$\II(n_{*,\R})$ of the 2 new nodes. Notice that
        each internal node of the green path $\gamma$ is at distance 1 from
        $n_{*,\R}$ resp.~$n_{*,\L}$
        Also, the links leaving $n_{*,s}$, $s\in\{\L,\R\}$ have been added
        during the split.
}
      \label{fig:after-split}
    \end{center}
  \end{figure}

\begin{lemma}\label{l:split}
  The move split-a-node-along-a-path transforms a 3-ball into a
  3-ball. The f-vector $\langle t,f_\s,n_\i \rangle$ is mapped
  to $\langle t+|\gamma|,f_\s+2,n_\i \rangle$, where $|\gamma|$ is
  the number of
  edges in $\gamma$.
\end{lemma}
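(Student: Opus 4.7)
The plan is to verify the claim in two stages: first that $T'$ is again a triangulated 3-ball, and then to check each of the three f-vector components by direct counting.

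For the topological claim, I would analyze the local replacement carried out by the move. Let $R$ denote the new cluster of tetrahedra, i.e.\ the union of the two cones $n_{*,\L}*\KK_\L$ and $n_{*,\R}*\KK_\R$ together with the bridge tetrahedra $(n_{*,\L},n_{*,\R},e)$ for $e\in\gamma$. The goal is to show that $R$ is a triangulated 3-ball whose intersection with $T\setminus\mathrm{star}(n_*)$ is precisely the disk $\II(n_*)=\KK_\L\cup\KK_\R$. Granted this, the move excises the 3-ball $\mathrm{star}(n_*)$ (which meets $\partial T$ in the 2-disk $n_**\EE(n_*)$) and replaces it by the 3-ball $R$, glued along the same disk $\II(n_*)$; standard PL arguments then give that $T'$ is a 3-ball.

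To see that $R$ is a 3-ball, I would argue in two steps. First, the bridge $B=\bigcup_{e\in\gamma}(n_{*,\L},n_{*,\R},e)$ is the simplicial join $n_{*,\L}*n_{*,\R}*\gamma$ of an edge with the path $\gamma$, hence a 3-ball whose boundary splits into the two triangulated disks $n_{*,\L}*\gamma$ and $n_{*,\R}*\gamma$ plus the two triangles $n_{*,\L}*n_{*,\R}*a_i$, where $a_1,a_2$ are the endpoints of $\gamma$. Since $\gamma\subset\partial\KK_\L$, the disk $n_{*,\L}*\gamma$ is a simplicial subcomplex of $\partial(n_{*,\L}*\KK_\L)$, so gluing $B$ and $n_{*,\L}*\KK_\L$ along this common disk yields a 3-ball; attaching $n_{*,\R}*\KK_\R$ similarly along $n_{*,\R}*\gamma$ produces $R$. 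The inner boundary of $R$ that faces $T\setminus\mathrm{star}(n_*)$ is exactly $\KK_\L\cup\KK_\R=\II(n_*)$, matching the boundary of the excised star.

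The f-vector bookkeeping is then a direct enumeration. The removed star contributes $|\II(n_*)|=|\KK_\L|+|\KK_\R|$ tetrahedra, whereas $R$ contributes $|\KK_\L|+|\KK_\R|+|\gamma|$, giving $\Delta t=+|\gamma|$. For the surface faces, the external faces $(n_*,e)$ with $e\in\EE(n_*)$ are removed, and the new external faces of $R$ are: the cones $(n_{*,\L},e)$ over the arc of $\partial\II$ on the $\L$-side, the cones $(n_{*,\R},e)$ over the arc on the $\R$-side, and the two endpoint triangles $(n_{*,\L},n_{*,\R},a_i)$. Since the two arcs together recover the whole polygon $\EE(n_*)$, I get $\Delta f_\s=+2$. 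For the internal node count, $n_*$ (external) is swapped for the two external nodes $n_{*,\L},n_{*,\R}$, and every other node retains its status: a node strictly interior to $\II(n_*)$ appears only in faces $(n_{*,\s},x,y)$ with $(x,y)$ interior to $\KK_\s$ or in $\gamma$, none of which is external in $R$; a node on $\partial\II$ keeps its external status through the cones just enumerated. Hence $\Delta n_\i=0$.

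The step I expect to be the real obstacle is making the gluing argument for $R$ genuinely simplicial rather than merely topological; that is, ensuring that the disks $n_{*,\L}*\gamma\subset \partial B$ and $n_{*,\L}*\gamma\subset\partial(n_{*,\L}*\KK_\L)$ are identified via a simplicial isomorphism, and similarly on the right. This is straightforward because both copies are literally the cone on the same path $\gamma$ from the same apex, but care is needed to verify that the resulting complex contains each added tetrahedron exactly once and that no spurious identifications occur among the faces $(n_{*,\L},x,y)$ with $(x,y)\in\gamma$ coming from $\KK_\L$ and from $B$ — which is handled by the case analysis already sketched above.
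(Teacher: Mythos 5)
Your f-vector bookkeeping is essentially identical to the paper's: the paper observes that steps 1, 3, 4 of the move cancel in the tetrahedron count while step 5 adds $|\gamma|$ bridge tetrahedra, that the two new external faces are exactly those sharing the new edge $(n_{*,\L},n_{*,\R})$, and that the node count is unaffected. Your removal-of-a-star formulation and the paper's step-by-step tally are two phrasings of the same count, and both give $\langle |\gamma|,+2,0\rangle$.

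Where you genuinely go beyond the paper is the topological claim. The paper's proof asserts the 3-ball\,$\to$\,3-ball statement but proves only the f-vector part; you actually establish ballness by exhibiting the replacement region $R$ as the union of two cones $n_{*,\L}*\KK_\L$ and $n_{*,\R}*\KK_\R$ with the bridge $B = n_{*,\L}*n_{*,\R}*\gamma$, gluing each pair along a common disk ($n_{*,\L}*\gamma$ resp.\ $n_{*,\R}*\gamma$), and then noting that $R$ meets $T\setminus\mathrm{star}(n_*)$ precisely along $\II(n_*)$, so that swapping $\mathrm{star}(n_*)$ for $R$ is a disk-for-disk replacement that preserves ballness. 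Your final caveat about checking that the simplicial identifications are honest isomorphisms and that no face is created twice is well taken and is indeed the one place one must be careful; but it poses no actual difficulty here, since each copy of $n_{*,\s}*\gamma$ is literally the same abstract simplicial cone over $\gamma$, and the tetrahedra $(n_{*,\s},f_*)$ for $f_*\in\KK_\s$ and $(n_{*,\L},n_{*,\R},e)$ for $e\in\gamma$ are determined by disjoint data. In short: your argument is correct, the counting matches the paper, and your topological justification supplies a detail the paper takes for granted.
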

The f-vector changes by $\langle |\gamma|,+2,0 \rangle$. In
particular, the number of tetrahedra \emph{increases}. But we will
show that this increase can be controlled.

\begin{proof}

The count of the f-vector
is as follows: Removing and adding the tetrahedra in steps 1,3,4 above
does not change their number. The number of external faces increases
by two, namely the two external faces sharing the new edge
$(n_{*,\L},n_{*,\R})$. And each internal face $(n_*,e)$ which connected $n_*$
to an edge $e$ in
$\gamma$ gives rise to a new tetrahedron $(n_{*,\R},n_{*,\L},e)$.
 There are $|\gamma|$ such faces and so the f-vector is seen
to change by $\langle |\gamma|,+2,0 \rangle$, as asserted.

\end{proof}

\subsection{Summary}

In the sequel, we want to bound the effect of removing internal nodes,
since our building blocks are the nuclei, which do not have any
internal nodes. Eliminating the internal nodes will cost the addition
of tetrahedra, and the issue here is how many are needed to obtain a
ball without internal nodes. Internal nodes disappear when we perform
the remove-1-tetra operation, and only then.

Before starting the bounds proper, we explain here the point of our
construction, based on the evolution of the f-vectors $\langle
t,f_\s,n_\i\rangle$. Open-a-2-face costs a change $\langle
0,2,0\rangle$, and split-a-node-along-a-path costs
$\langle |\gamma|,2,0\rangle$, where $|\gamma|$ is the length of the path
along which we cut. 
In principle, each path $\gamma$ might have a length proportional to
the number of nodes, which in turn would imply that the sum of the
lengths of all paths exceeds $\OO(n_{\rm tot}^2)$.
So one needs a strategy which improves this naive bound.

While we cut, new
external edges appear, and also, new external edges appear when we
remove a tetrahedron which costs $\langle -1,2,-1\rangle$. But it is only
this operation which reduces the number of internal nodes. So, there
are two opposing tendencies. One is the preparation of promoting an internal node into 
an external one, and it \emph{adds} many tetrahedra, and the
other is remove-1-tetra, which reduces the number of internal nodes by
1. 
The real issue is thus to bound the \emph{number of added tetrahedra
per removed 
internal node}. We will perform this bound in terms of the number $e_\s$
of internal edges.
Our main result is Corollary \ref{c:increase} which says that the number of
internal edges grows by no more than $C_\Delta(t+n_\i)$.
The Euler relations \eref{e:euler} allow to express $t$ as a function
of $e_\s$, $f_\s$, and $n_\i$,
\begin{equ}
  t=e_\i-n_\i+f_\s/2-1~.
\end{equ}
Therefore, and since $n_\i <4t$ and $f_\s<4t$, Corollary~\ref{c:increase} implies that the elimination
of all  $n_\i$ internal
nodes leads to an f-vector of the form
\begin{equ}
 \langle t,f_\s,n_\i\rangle  \to \langle t' , f_\s',0 \rangle~,\qquad 
f_\s'< C\cdot t~,\quad  t' < C\cdot t~,
\end{equ}
with a finite constant $C$ which is \emph{independent} of the triangulation.

\subsection{Removing internal nodes}\label{s:intnodes}

 \subsubsection{Definitions and strategy}\label{s:strategy}

Given any triangulation, we define the \emx{depth} $D_x$ of a node $x$ as the
minimal number of connected edges needed to reach the boundary, starting from
$x$. The strategy
will consist in recursively reducing the depth of any internal node by
1. This is repeated until no internal nodes remain.

We classify an internal node $x_*$ at depth 1 in 3 flavors, which we
call C0--2:
\begin{myenum}
 \item[C0:] $x_*$ is the internal node of a removable tetrahedron.
 \item[C1:] $x_*$ is not of type C0 but is in a face $(x_*,n_*,m_*)$
   where $(n_*,m_*)$ is an
   external edge.
 \item[C2:] $x_*$ is neither of type C0 nor C1.
\end{myenum}

We enumerate the external nodes in an arbitrary order, leading to a
list $\LL _{0}=\{n_{*,1},\dots,n_{*,k}\}$.
Similarly, for $d>0$, we define $\LL_d$ as the nodes at depth $d$
from the surface.
Given an
external node $n_*\in\LL_0$, we consider its hemisphere
$\II(n_*)$.

An internal node $x_*\in\II(n_*)$ of type C2 can (only) be
\emx{promoted} to an internal node
of type C1 by drawing a path $\gamma \subset \II (n_*)$ that goes through it
and
splitting $n_*$ into $n_{*,\L}$ and $n_{*,\R}$ along $\gamma$.
Indeed, one easily sees that $n_{*,\R} \in \EE(n_{*,\L})$ and
that $(n_{*,\L},n_{*,\R},x_*)$ is a face.

In the same manner, we see that a node $x_*\in\II(n_*)$ of type C1
can be promoted into an internal node of type C0 by drawing a path
 $\gamma \subset \II (n_*)$ which contains the edge
$(x_* , y)$.  Here, $y$ is the external node of the face $(x_*,n_*,y)$ which
defines $x_*$ as a node of type C1. Splitting $n_*$ along $\gamma$,
the tetrahedron $(n_{*,\L},n_{*,\R},y,x_*)$ becomes removable.

Finally, any internal node of type C0 can be made external by simply
removing one tetrahedron.

The strategy is in 4 steps (3 sweeps). We set $\LL=\LL_0$.
\begin{myitem}
 \item {\bf Step 1 (Sweep C2$\to$C1)} :  We promote all the $x_*$ of type C2 in
the
   following order:
For each $n_*\in\LL$, we
promote all internal nodes of $\II(n_{*})$ of type C2
 into internal nodes of type C1. We will show that this can be done in
 such a way that
every internal edge of the triangulation $\II(n_{*})$ belongs to at
most 1 of the splitting paths (as defined in \sref{s:def-split}).

When this first step is complete, all internal nodes at depth 1 are of
type C1 or C0. There appears a new set $\MM$ of external nodes
containing the nodes of $\LL$ which were not split and new external nodes
obtained by the splitting.
\item {\bf Step 2 (Sweep C1$\to$C0)} : We promote all the $x_*$ of type C1
  in the following order: For each $n_*\in\MM$, we
promote all promotable internal nodes of $\II(n_{*})$ of type C1
 into internal nodes of type C0\footnote{A node $x_*$ of type C1 can be
   promoted to C0 only if it is connected to a node of
   $\EE(n_*)$. This might not be true for all $n_*$ for which
   $x_*\in\II(n_*)$ but there are at least two  $n_*$ for which $x_*$
   is promotable.}. We will show that this can be done in
 such a way that
every internal edge of the triangulation $\II(n_{*})$ belongs to at
most 1 of the splitting paths (as defined in \sref{s:def-split}).

 \item {\bf Step 3 (Sweep C0$\to$external)} : Finally, we make each node of type C0
external by
   removing one tetrahedron.

\item {\bf Step 4} : At this point every internal node has been moved
  up one level of depth. In particular, we let $\LL$ denote those
  nodes which have moved to the surface in step 3. (If the current step is at level $d$, then this set 
equals $\LL_{d+1}$.)

We continue until no internal nodes are left.
\end{myitem}

Since the depth of any node is bounded,
the procedure will end after a finite number of recursive steps.

\subsubsection{Reducing C2-nodes to C1-nodes}\label{s:c21}

Given an external node $n_*\in\LL$, we now describe in detail the
recursive algorithm which promotes the
internal nodes of type C2 in $\II=\II(n_*)$ to type C1. This is achieved
by a succession of carefully chosen moves of type
split-a-node-along-a-path.

  \begin{figure}[h]
    \begin{center}
      \includegraphics[width=0.9\textwidth]{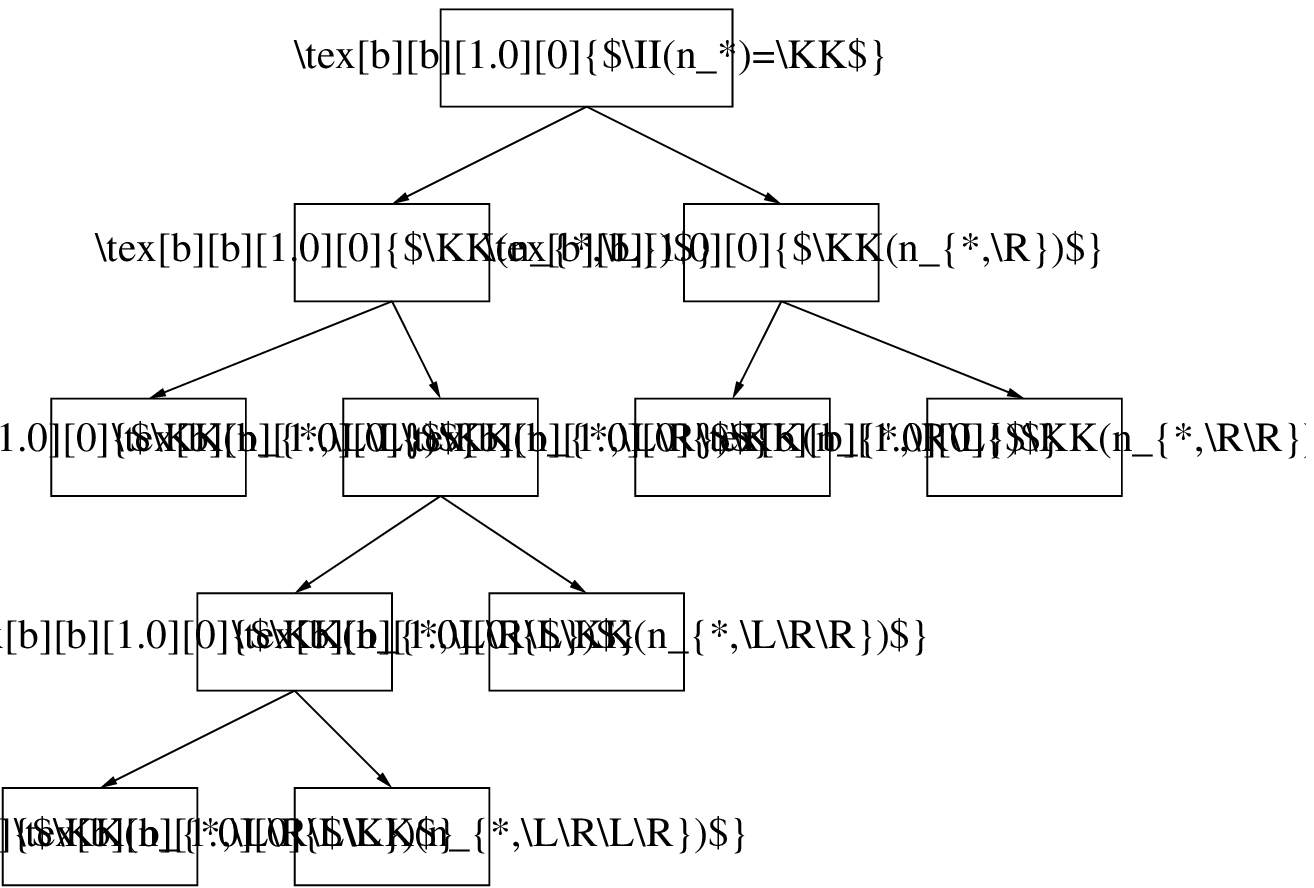}
      \caption{An example of a binary tree of pieces associated with
        the hemisphere of an external node $n_*$ containing 6 triangles.}
      \label{fig:binarytree}
    \end{center}
  \end{figure}

  \begin{figure}[h]
    \begin{center}
      \includegraphics[width=\textwidth]{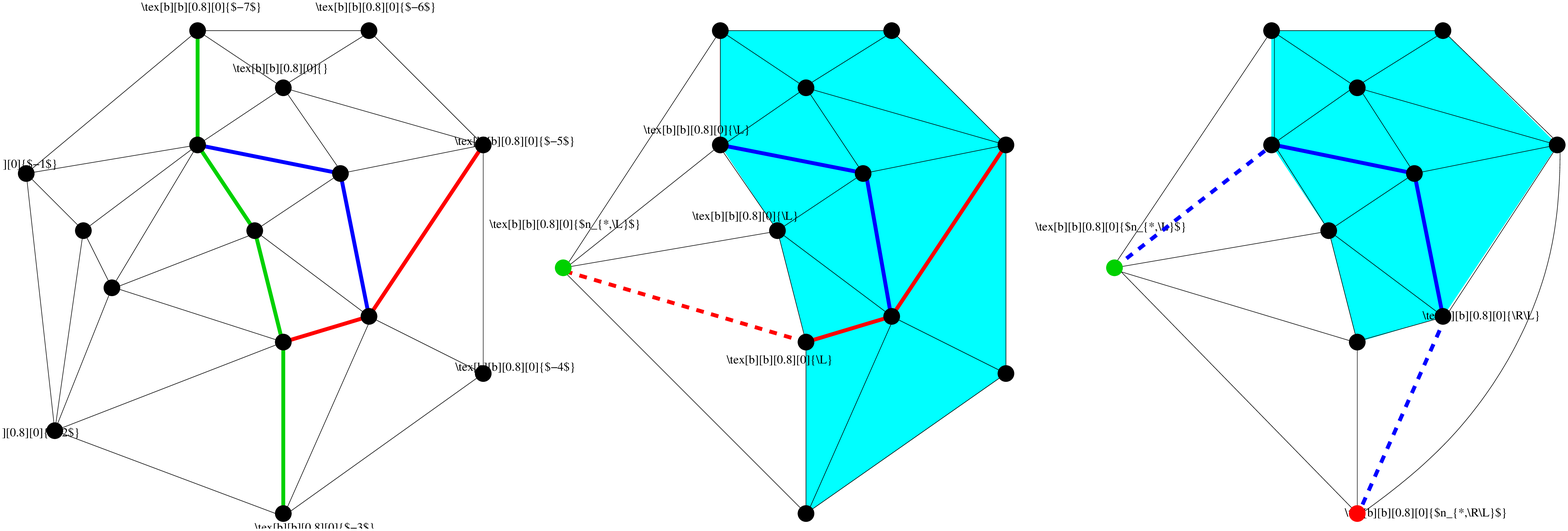}
      \caption{The left panel shows the internal flower $\II(n_*)$ of $n_*$. We
split it in succession along the green, red and blue paths.
        We first split $n_*$ into $n_{*,\R}$ and $n_{*,\L}$ along the green
path. The
middle panel shows $\II(n_{*,\R})$ and the green node
        is $n_{*,\L}$. The shaded region is $\KK_{\R}$ and the new
        labels are $\L$. One end of the red path has a negative label,
        while the other has the label $\L$ and must therefore be
        connected to $n_{*,\L}$.
We obtain the cutting path $\gamma_\R$, and after the cut, we
        obtain two pieces $\KK_{\R\R}$ and $\KK_{\R\L}$. In the third
        panel we show the hemisphere of $n_{*,\R\R}$. The blue path
        has labels $\L$ and $\R\L$ at its extremities, which must
        therefore be connected to $n_{*,\L}$ and $n_{*,\R\L}$. This
        defines the cutting path $\gamma_{\R\R}$. Note that it always
        suffices to add at most 2 dashed segments.
}
      \label{fig:newpaths}
    \end{center}
  \end{figure}

Each of these cuts produces a ``left'' and a ``right'' piece, which
are then cut again into left and right pieces, until only triangles
remain. The pieces will be noted by $\KK_{\S}=\KK(n_{*,\S})$, where $\S$ is a
sequence of letters $\L$ and $\R$ which designate the successive
choices of left and right.

Thus, we construct a binary tree of pieces (see \fref{fig:binarytree}).
In detail:

\begin{myenum}
\item Label the nodes of $\partial\II$ from $-1$ to $-|\partial\II|$.
\item
The hemisphere $\II$ is an admissible triangulation in the sense of \dref{def:admissible}.
\pref{prop:twocolor} implies the existence
  of a shortest path $\gamma$ which connects two nodes of
  $\partial\II$ (with different labels). We choose this path $\gamma$.
\item After splitting along this path, $\II$ is divided in two pieces,
  as shown in \fref{fig:after-split}. The two pieces are called $\KK_\L$
  and $\KK_\R$. 
The splitting has replaced  $n_*$ by $n_{*,\L} $ and
  $n_{*,\R}$ and $\II(n_{*,\L})$ is actually just $\KK_\L$ with
  the cone between $n_{*,\R}$ and $\gamma$ added. This also means that
$n_{*,\R}$ is in the external flower $\EE(n_{*,\L})$ of $n_{*,\L}$.
Analogous
  terminology is used for the other half. At this point, $\S$ is equal
  to $\L$ or $\R$, and we continue with $\S=\L$ (and do later $\S=\R$).
\item If $\KK_\S$ is a triangle, we are done (for this branch of the tree).
\item Label all nodes on $\partial \KK_\S$ which had no label with the
  label $\hat\S$, where $\hat\S$ is obtained from $\S$ by exchanging
  the last letter, cf.~\fref{fig:newpaths}. In this way, the newly
  labeled nodes are connected to $n_{*,\hat \S}$ in $\II(n_{*,\S})$.

\item Considering $\KK_\S$, \pref{prop:twocolor} implies the existence
  of a new shortest path $\tilde\gamma_\S$ which connects two nodes of
  $\partial\KK_\S$ with different labels.
\item We extend the path $\tilde\gamma_\S$ as follows: If the end of
  $\gamma_\S$ has a
  negative label, we do nothing, and if the label is some sequence
  $\S'$ we
  connect the end to $n_{*,\S'}$ by one edge. Doing this for both ends
  we obtain a path $\gamma_\S$.
\item Perform a
  split-a-node-along-a-path on $\gamma_\S$ and continue with step
  4 for the pieces $\KK_{\S\L}$ and $\KK_{\S\R}$.
\end{myenum}

\begin{remark}
 The boundary of a hemisphere $\II_\S = \II(n_{*,\S})$ is composed of 2 types
of nodes:
  \begin{myenum}
    \item Nodes with a negative label which are part of the original boundary
$\EE$.
    \item The children $n_{*,\S'}$ of the original node $n_*$, where
      $\S'$ is a sequence of R's and L's.
  \end{myenum}

 The boundary of a piece $\KK _\S$, which is a sub-triangulation of $\II_\S$,
is also composed of 2 types of nodes:
  \begin{myenum}
    \item Nodes with a negative label which are part of the original boundary
$\EE$, and therefore they are part of the boundary of the hemisphere $\partial
\II_\S$ as well.
    \item The other nodes whose label is some sequence $S'$. These nodes satisfy the following two
conditions:
          \begin{myenum}
            \item All nodes of $\partial \KK_\S$ with the same label
form a connected arc of $\partial \KK_\S$.
            \item If a node $y\in\partial \KK_\S$ has the label $\S'$, then $y$
is an internal node of the hemisphere $\II_\S$ seen as a 2d triangulation.
Furthermore,
              $n_{*,\S'} \in \partial \II_\S$ and $(y,n_{*,\S'})$ is an
internal edge of the triangulation $\II_\S$.
          \end{myenum}
  \end{myenum}

\end{remark}

\begin{theorem}\label{t:c2c1}
The algorithm decomposes the triangulation $\II$, and the
  sub-pieces $\KK_\L$, $\KK_\R$ by sequences of paths
  $\gamma_{s_1,\dots,s_k}$  until all the pieces
  $\KK_{s_1,s_2,\dots,s_k}$ with $s_i\in\{\R,\L\}$ are reduced to
  simple triangles. Furthermore,

every edge in $\II\setminus \partial\II$ is in at most
    one path.
\end{theorem}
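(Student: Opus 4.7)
The plan is to prove the theorem by induction on the recursive depth of the algorithm, establishing two independent claims: (a) the algorithm is well-defined and terminates with every leaf piece being a single triangle, and (b) no internal edge of $\II$ is ever reused in two different splitting paths. Both claims rest on a single invariant: every piece $\KK_\S$ produced during the recursion is an admissible triangulation in the sense of \dref{def:admissible}, which is exactly what allows \pref{prop:twocolor} to be applied at the next step.

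For claim (a), I would first observe that the base case is trivial: assigning each of the $|\partial\II|\ge 3$ boundary nodes of $\II$ a distinct negative label makes K1--K3 automatic. For the inductive step, inside any non-triangle $\KK_\S$, \pref{prop:twocolor} supplies a \emph{shortest} path $\tilde\gamma_\S$ between two differently-labeled boundary nodes and produces two children $\KK_{\S\L}$ and $\KK_{\S\R}$. After relabeling the newly exposed boundary nodes (the interior nodes of $\tilde\gamma_\S$) with the fresh label $\hat\S$, K1 follows in each child because the two endpoints of $\tilde\gamma_\S$ keep their distinct old labels on the new boundary; K2 follows because the $\hat\S$-nodes form precisely the interior arc of $\tilde\gamma_\S$ (a connected arc), and the fact that the two endpoints carry distinct labels prevents any old label from being split across both children; K3 for old-label pairs is inherited directly from $\KK_\S$, while K3 for $\hat\S$-pairs is exactly the shortest-path argument, since any chord inside a child would yield a strictly shorter cutting path for $\KK_\S$. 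Termination is then automatic because each split strictly decreases the triangle count in both resulting pieces, so step 4 is eventually triggered on every branch.

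For claim (b), I would exploit the nested binary-tree structure of the pieces. By \pref{prop:twocolor}, each $\tilde\gamma_\S$ consists entirely of edges internal to $\KK_\S$, and after the split those same edges become boundary edges of both $\KK_{\S\L}$ and $\KK_{\S\R}$. Given two distinct labels $\S\neq\S'$, either one is a prefix of the other or they are incomparable. If $\S$ is a prefix of $\S'$, then $\KK_{\S'}$ is strictly nested inside one of the children of $\KK_\S$, so $\tilde\gamma_\S$ lies on or outside $\partial\KK_{\S'}$ while $\tilde\gamma_{\S'}$ consists of edges interior to $\KK_{\S'}$; if $\S$ and $\S'$ are incomparable, then the interiors of $\KK_\S$ and $\KK_{\S'}$ are disjoint by construction. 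Either way the edge sets of $\tilde\gamma_\S$ and $\tilde\gamma_{\S'}$ are disjoint. The at-most-two cone edges added in step 7 to extend $\tilde\gamma_\S$ to $\gamma_\S$ live in $\II(n_{*,\S})\setminus\II$ and so do not contribute to the count of edges in $\II\setminus\partial\II$.

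I expect the main obstacle to be the verification of admissibility at each split, and in particular K3 for the new label $\hat\S$: it is precisely here that the \emph{shortest}-path hypothesis in \pref{prop:twocolor} is indispensable, since an arbitrary splitting path could leave chords of itself in a child and thereby violate K3. A secondary delicate point is the preservation of K2 for pre-existing labels under the split; handling it cleanly requires the observation that the two endpoints of $\tilde\gamma_\S$ carry distinct labels, which rules out the pathological situation in which an old label would be forced to split into two disconnected arcs on a child boundary.
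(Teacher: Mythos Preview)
Your approach is essentially the same as the paper's: both rest on \pref{prop:twocolor} to guarantee that each piece $\KK_\S$ is admissible and can be cut again, and both observe that since each new path $\tilde\gamma_\S$ lies in the interior of the current piece, no edge of $\II\setminus\partial\II$ can be covered twice. Two small remarks: first, your detailed re-verification of K1--K3 for the children is redundant, since \pref{prop:twocolor} already asserts that $\KK_{\S\L}$ and $\KK_{\S\R}$ are admissible with the new label; second, the paper spends a moment you skip on step~7, checking that the \emph{extended} path $\gamma_\S$ (not just $\tilde\gamma_\S$) is a genuine splitting path in the hemisphere $\II(n_{*,\S})$, i.e.\ that its two endpoints are distinct nodes of $\partial\II(n_{*,\S})$---this is where the ``different labels'' condition is actually cashed in, and you should make it explicit.
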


\begin{proof}
We need to check that the different steps of the algorithm can be
performed. The steps 1--3 follow from the definition of
split-a-node-along-a-path.
Steps 4 and 5 need no verification. Step 6 relies on
\pref{prop:twocolor}, which implies the existence of a (shortest) path $\tilde \gamma _\S$, cutting 
the admissible piece $\KK_\S$ into two admissible pieces $\KK_{\S\L}$ and $\KK_{\S\R}$.

In step 7, we need to make sure that the path $\gamma_\S$ connects
two \emph{different} nodes of $\EE(n_{*,\S})$ which is also
$\partial\II(n_{*,\S})$, to be distinguished from
$\partial\KK(n_{*,\S})$.
The whole construction of labels has been done with this aim in
mind. Note that if a node $u$ has a negative label, we do nothing because
any node $u$ with a negative label is part of the original boundary $\EE(n_*)$,
implying that
if $u\in\II(n_{*,\S})$, then $u\in\EE(n_{*,\S})$ for any child
$n_{*,\S}$ of $n_*$. On the other hand, if
the label is the sequence $\S'$, then by construction (step 5),
 $u$ is connected to $n_{*,\S'}$ with one edge.
Since the labels are different by construction, the path
$\gamma$ is a splitting path, and therefore a cut along it is possible.
In step 8, we need to verify that the cut can
indeed be done, and that the algorithm can be applied to the children
of the $\KK$ which was just cut. But this is the content of
\pref{prop:twocolor}, which shows that the cut can be done in such a
way that the children are admissible in the sense of
\dref{def:admissible}.

Since new paths are always constructed in the interior of $\KK$, and
the $\KK$'s are cut along them, it is obvious that no edge is covered
by more than one path.

\end{proof}

\subsubsection{Reducing C1-nodes to C0-nodes}\label{s:c10}

Let $T$ be a triangulation of a ball.
Consider an external node $n_*$ of $T$ and let $\II = \II(n_*)$ be its
internal hemisphere. Furthermore, assume that all nodes of $\II$ are either
external (with regard to $T$)
or internal of type C0 or C1 but not C2. We will describe an algorithm which
promotes all the internal nodes of type C1 of $\II$ to internal nodes of type
C0.
The approach is somewhat different from that of the previous section. Indeed
promoting an internal node $x$ of type C2 to an internal node of type C1 is
done by splitting some external node $n_*$
along a path going through $x$. However, let $x\in \II(n_*)$ be an
internal node of type C1
and let $(x,y,n_*)$ be an internal face which defines $x$ as C1; by
hypothesis, $y\in\partial\II$. Promoting $x$ to an internal node of type C0
is done
by splitting $n_*$ along a path which contains the edge $(y,x)$.

For every internal node $x$ of type C1 in $\II(n_*)$ we choose one of
the $y\in\partial\II$ for which $(x,y,n_*)$ is an internal face and
call it $y(x)$.
We define
\begin{equ}
  \YY=\{ (x,y(x))~|~x \text{~is C1~} \}~.
\end{equ}
We will eliminate elements in the list $\YY$ by iterating an algorithm
similar to the one in the previous section, until none are left. A
binary tree of left and right pieces will be formed in the process (see
\fref{fig:binarytree}).

At the first step of this algorithm, this tree only contains
one element,
namely the hemisphere $\II$. We will form a tree of $\KK$'s as before,
starting at $\KK=\II$.

The algorithm starts with steps 1 and 2 below, and then repeats
the other steps until it stops.
\begin{myenum}
\item Pick an edge $(x,y)=(x,y(x))\in\YY$.
\item By hypothesis, $y\in\partial\II(n_*)$. By \lref{l:3connected}
  there is a
second, disjoint, simple path connecting $x$ to a node $z\in\partial\II(n_*)$,
$z\neq y$.
This defines a splitting path $\gamma$ connecting 2 distinct nodes $y$ and $z$
of
$\partial\II(n_*)$.
Similarly to the previous section, we split $n_*$ along $\gamma$ into
$n_{*,\R}$ and $n_{*,\L}$. We add the 2 new pieces $\KK(n_{*,\R})$ and
$\KK(n_{*,\L})$
as two leaves of $\KK$ in the tree. We remove the edge $(x,y)$ from the
list $\YY$. Note that the path $\gamma$ might promote a second internal node
$x'$
of type C1 into a node of type C0, if the edge $(x',z)$ is in the list $\YY$
and in the path $\gamma$. In that case, both edges $(x,y)$ and $(x',z)$ are
removed from $\YY$.
\item If the list $\YY$ is empty, we are done.
\item Pick an edge $(x,y)\in\YY$.
\item Find the piece $\KK(n_{*,s_1,\dots,s_k})$, where $s_i\in\{\L,\R\}$,
among the leaves of the binary tree which contains the edge $(x,y)$.
We use the abbreviations $\ss=\{s_1,\dots,s_k\}$ and $n_{*,\ss}$.
The edge  $(x,y)$ belongs to exactly one piece\footnote{Note that the
  only edges which are common to more than one piece
are the edges of the
paths along which we already cut. Since $(x,y)$ is still in the list $\YY$, it
cannot
 be such an edge.}.
\item
Observe that the node $y$ is in
$\partial\II(n_{*,\ss})\cap \partial\II(n_*)$\footnote{By
  hypothesis, $y\in\partial\II(n_*)$ and therefore also
$y\in\partial\II(n_{*,\ss})\cap \partial\II(n_*)$.}.\hfill\break
$\bullet$ If $x$ is in the interior of $\KK(n_{*,\ss})$, the edge $(x,y)$
gives us the first simple path connecting $x$ to $\partial
\II(n_{*,\ss})$ and
by   \lref{l:3connected} there is a second independent path connecting $x$ to a
node $z\in\partial\KK(n_{*,\ss})$, $z\neq y$.

\ \ \ If $z$ is also in
$\partial\II(n_{*,\ss})$ we have found a $\gamma_\ss$ along
which we can cut. Note that in this case, the path $\gamma_\ss$ might promote a second node $x'$ of type C1; this 
happens if $z\in\partial \II(n_*)$ and $(x',z)$ is an edge of $\gamma_\ss$.

\ \ \ If $z\notin\partial\II(n_{*,\ss})$, the path $\gamma_\ss$ is obtained by
adding the edge which
connects $z$ to the tip of the cone\footnote{The distance between
  $\partial\II(n_{*,\ss})$ and any node in $\partial \KK(n_{*,\ss})$ is
  at most 1, see \fref{fig:newpaths}. The node $z$ belongs to a path $\gamma_{\S'}$ along which we already cut.
This implies that $z$ is connected to $n_{*,\S'\L}$ or $n_{*,\S'\R}$, called the tip of the cone associated with $z$.\label{footnote}}.\hfill\break
$\bullet$ If $x$ is not in the interior of $\KK(n_{*,\ss})$, $\gamma_\ss$
is found by connecting $x$ to a tip of one of the cones attached to
$\KK(n_{*,\ss})$\footnote{Note that the node $y$ is \emph{not} on a
  tip of a cone but is on the original boundary $\partial\II(n_*)$.}\,(see Footnote~\ref{footnote}).
\item We split $n_{*,\ss}$ along the path $\gamma_\ss$ and
add the 2 new pieces
  $\KK(n_{*,\ss\R})$ and $\KK(n_{*,\ss\L})$ to the tree
as leaves of $\KK(n_{*,\ss})$. Note that
  $\KK(n_{*,\ss})$ is no longer a leaf of the tree and will never be
encountered in the remaining steps of the algorithm.
  Finally, we remove the edge $(x,y)$ (and eventually $(x',z)$ if $x'$ is also promoted by $\gamma_\ss$) from the list
$\YY$.
\item We continue with step 3.
\end{myenum}

The algorithm stops when all
internal nodes of type C1 of $\II(n_*)$ have been promoted to C0.
Since each branch of the tree is used at most once and since we never cut along
the boundary of any $\KK_\ss$ we have shown:
\begin{theorem}\label{t:c1c0}
 The algorithm decomposes the triangulation $\II$ along a sequence
of simple paths; it promotes all of the internal nodes
 of $\II$ of type C1 into nodes of type $C0$. Furthermore,

every edge in $\II\setminus \partial\II$ is in at most
    one path.
\end{theorem}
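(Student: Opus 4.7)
The plan is to establish three items: (a) at every iteration the algorithm can find a splitting path; (b) each cut strictly reduces $|\YY|$ while promoting the selected C1-node to C0; (c) no edge of $\II\setminus\partial\II$ is traversed by two different cutting paths. From these, the theorem follows immediately.

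For (a), the base case (step 2) is immediate: the edge $(x,y)\in\YY$ gives one simple path from $x$ to $y\in\partial\II$, and \lref{l:3connected} supplies a second disjoint simple path from $x$ to a node $z\in\partial\II$ with $z\neq y$; their concatenation is a splitting path passing through $x$. For a general iteration I must produce a splitting path of the hemisphere $\II(n_{*,\ss})$, not merely one reaching $\partial\KK(n_{*,\ss})$. The structural invariant to exploit (as noted in the footnote to step 6) is that every node of $\partial\KK(n_{*,\ss})\setminus\partial\II(n_{*,\ss})$ is adjacent by a single edge to the cone tip $n_{*,\ss'}\in\partial\II(n_{*,\ss})$ of some previously cut path. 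Thus applying \lref{l:3connected} inside $\KK(n_{*,\ss})$, and where necessary extending the auxiliary path by one cone edge, always yields a valid splitting path of $\II(n_{*,\ss})$; the sub-case where $x$ itself lies on $\partial\KK(n_{*,\ss})$ is handled by extending $(x,y)$ at its $x$-end in the same way.

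For (b), once the cut along $\gamma_\ss$ (which contains the edge $(x,y)$) is performed, the split-a-node-along-a-path construction of \sref{s:def-split} produces the tetrahedron $(n_{*,\ss\L},n_{*,\ss\R},y,x)$ with external face $(n_{*,\ss\L},n_{*,\ss\R},y)$ and unique internal node $x$, so $x$ is now of type C0 in the sense of \sref{s:c10}. The edge $(x,y)$ is then removed from $\YY$, together with any other edge of $\YY$ that happens to lie on $\gamma_\ss$. Since $|\YY|$ strictly decreases at every iteration, the algorithm terminates with $\YY=\emptyset$, i.e., with every C1-node of $\II$ promoted to C0.

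Point (c) follows by an inductive argument: each $\gamma_\ss$ is constructed strictly inside the open interior of the current leaf $\KK(n_{*,\ss})$, and after the cut its edges become part of $\partial\KK(n_{*,\ss\L})$ and $\partial\KK(n_{*,\ss\R})$; since all later cuts live in the interior of descendant leaves, no edge of $\gamma_\ss$ can ever be reused. The main obstacle I anticipate is the bookkeeping in (a): verifying that the cone-edge extension always yields a \emph{simple} path with two \emph{distinct} endpoints on $\partial\II(n_{*,\ss})$, and that after the cut both children $\KK(n_{*,\ss\L})$ and $\KK(n_{*,\ss\R})$ inherit the structural properties (2d triangulation of a polygon, correct placement of negative-label and sequence-label boundary nodes) needed to run the algorithm recursively.
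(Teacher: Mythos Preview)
Your proposal is correct and follows the same approach as the paper's (very terse) proof: the paper's argument amounts to the observation that cuts are taken in the interior of the current leaf $\KK_\ss$ and hence no edge is reused, while termination and the promotion to C0 are implicit in the algorithm description of \sref{s:c10}. Your items (a)--(c) simply make explicit what the paper leaves to the reader, including the cone-edge extension and the case distinction on whether $x$ lies in the interior of $\KK(n_{*,\ss})$; the bookkeeping you flag (simplicity of $\gamma_\ss$, distinctness of its endpoints) is indeed routine once one notes that $y$ carries a negative label while any cone tip $n_{*,\ss'}$ does not.
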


\subsubsection{Change of the f-vector after the entire recursion}

In this section, we compute the total change in the f-vector resulting from
the elimination of all internal nodes.

Note that we will compute
the total increase in the number of internal edges instead of tetrahedra.
The two numbers are related by \eref{e:euler}.

\begin{definition}
  We need 3 counters at each depth $d$ of the original triangulation:
  \begin{myitem}
    \item $a_d$ is the number of internal edges $(x,y)$ with
      $D_x=d$ and $D_y=d+1$.
    \item $b_d$ is the number of internal edges $(x,y)$ with
      $D_x=d$ and $D_y=d$.
    \item $c_d$ is the number of internal faces $(x,y,z)$ with
      $D_x=d$ and $D_y=d+1$. (This implies $D_z=d$ or $d+1$.)
  \end{myitem}
\end{definition}

As every node is connected to nodes of the same depth or to depths
differing by at most 1, the following obvious relations hold:
\begin{equa}[e:sum]
 \sum _d (a_d + b_d) &= e~,\\
 \sum_d c_d &\le f_\i~,
\end{equa}
where $e$ is the number of internal edges, and $f_\i$ is the number of
internal faces.

Let $\Delta_d$ denote the increase of the number of internal edges
obtained when performing the steps C2$\to$C1$\to$C0$\to$external at level $d$.
\begin{proposition}\label{p:theprop}
  There is a constant $C'_\Delta$ such that
  \begin{equa}[e:Deltad]
    \Delta_d &\le C'_\Delta (a_d + b_d +a_{d-1} + c_{d-1})~,\text{ for }d>0~,\\
    \Delta_0 &\le C'_\Delta (a_d + b_d +n_\s)~,\text{ for }d=0~.
  \end{equa}
\end{proposition}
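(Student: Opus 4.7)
The plan is to bound $\Delta_d$ in three stages: first, convert the increase in internal edges into a sum over splitting-path lengths; next, use edge-disjointness within each processed hemisphere to convert this into a count of interior hemisphere edges; finally, charge these interior edges to the original counters $a_d,b_d,a_{d-1},c_{d-1}$.

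Using the Euler identity $e_\i = 1+n_\i+t-f_\s/2$ together with Lemma~\ref{l:split}, each split-a-node-along-a-path along $\gamma$ adds exactly $|\gamma|-1$ to $e_\i$, and each remove-1-tetra in step~3 decreases $e_\i$ by $3$. Summing over all elementary moves performed at level $d$,
$$\Delta_d \le \sum_\gamma (|\gamma|-1) \le \sum_\gamma |\gamma|,$$
where the sum runs over every splitting path used in the C2$\to$C1 and C1$\to$C0 sweeps of level $d$. Theorems~\ref{t:c2c1} and \ref{t:c1c0} then guarantee that within each processed hemisphere $\II(n_*)$ the splitting paths lie edge-disjointly inside $\II(n_*)\setminus\partial\II(n_*)$, so that
$$\sum_\gamma |\gamma| \le \sum_{n_*} \#\{e \in \II(n_*)\setminus\partial\II(n_*)\},$$
the outer sum being over the external nodes $n_*$ whose hemispheres are processed at level $d$ (those of $\LL_d$ together with the children created by step~1).

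The heart of the proof is to bound this right-hand side by $C'_\Delta(a_d+b_d+a_{d-1}+c_{d-1})$. An interior edge $e=(x,y)$ of $\II(n_*)$ is equivalent to an internal face $(n_*,x,y)$ of the 3d triangulation; since $x,y$ are neighbors of $n_*$ (which is at original depth $d$), we have $D_x,D_y\in\{d-1,d,d+1\}$. I would use the double-counting identity
$$\sum_{n_*\in\LL_d}\#\{e\in\II(n_*)\setminus\partial\II(n_*)\} \;=\; \sum_{F\text{ internal}}\#\{v\in F:D_v=d\},$$
and classify each such internal face $F$ by the multiset of depths of its three vertices. Faces with patterns $(d{-}1,d{-}1,d)$ and $(d{-}1,d,d)$ are directly counted by $c_{d-1}$; faces whose vertices lie only in depths $d$ and $d-1$ contribute edges already counted by $b_d$ and $a_{d-1}$; and faces containing a depth-$(d{+}1)$ vertex necessarily contain an edge of $a_d$, with the number of such faces per $a_d$-edge bounded by a 2d Euler/planarity argument using Lemma~\ref{l:euler2d} applied to the hemisphere.

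The main obstacle will be (i) the careful charging of the $c_d$-type faces --- those with pattern $(d,d,d{+}1)$ or $(d,d{+}1,d{+}1)$ --- back to $a_d$-edges via the local planarity of the 2d triangulation $\II(n_*)$, and (ii) controlling the edges created by previous levels' splits, which must be inductively traced back to original features counted by $a_{d-1}$ and $c_{d-1}$ without blowing up constants. The base case $d=0$ is treated separately, because depth $-1$ does not exist: the boundary complexity of the initial hemispheres $\II(n_*)$ for $n_*\in\LL_0$ is controlled by the total number of external nodes, yielding the replacement term $n_\s$ in place of $a_{-1}+c_{-1}$.
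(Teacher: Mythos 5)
Your outline correctly identifies the overall shape of the argument: bound $\Delta_d$ by $\sum_\gamma(|\gamma|-1)$, use the edge-disjointness from Theorems~\ref{t:c2c1} and \ref{t:c1c0} to reduce this to counting interior edges of the processed hemispheres, and then charge those to the depth-counters. However, what you flag as ``the main obstacle'' is in fact the bulk of the work, and your proposal does not supply it.

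There are two concrete gaps. First, the double-counting identity
$\sum_{n_*\in\LL_d}\#\{e\in\II(n_*)\setminus\partial\II(n_*)\}=\sum_F\#\{v\in F:D_v=d\}$
is computed with respect to the \emph{original} triangulation, but by the time level $d$ is processed the hemispheres $\II(n_*)$ have already been enlarged by the cone edges introduced when levels $0,\dots,d-1$ were swept, and by the remove-1-tetra steps. The identity therefore does not bound what you need without a quantitative control of how much the hemispheres grew. This is exactly the content of Lemma~\ref{l:changead}, $\hat a_{d-1}\le a_{d-1}+16\,c_{d-1}$, which the paper proves by a separate induction through Lemmas~\ref{l:changec21}; you name the issue but give no mechanism for resolving it. Second, your outer sum must also range over the hemispheres $\II(n_{*,\ss})$ for $n_{*,\ss}\in\MM_d$ that are created by the C2$\to$C1 sweep and then processed in the C1$\to$C0 sweep. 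These hemispheres contain edges that did not exist before the first sweep, so they are not captured by a face-count in the original triangulation either; the paper handles this by bounding $\Delta''_d$ in terms of $\Delta'_d$ via the relation $|\EE(n_*)|=|\EE(n_{*,\L})|+|\EE(n_{*,\R})|-4$ and $\sum_{n_*\in\MM_d}(|\EE(n_*)|-3)\le 4\Delta'_d$. Without both of these mechanisms the constant $C'_\Delta$ cannot be shown to be uniform in $d$ and in the triangulation, which is the whole point of the Proposition.

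As a side remark, the paper's route is slightly different from yours in that it does not classify internal faces by depth pattern; it applies Lemma~\ref{l:euler2d} directly to each hemisphere to bound its total edge count by $3n+2p-3$ and then sums, producing the terms $6a_d+12b_d+6\hat a_{d-1}+2\sum(|\EE(n_*)|-3)$. Your face-classification approach could plausibly be made to work, but it would still have to be corrected for the dynamically modified hemispheres, and that correction is where the real difficulty lies.
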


\begin{corollary}\label{c:increase}
  Eliminating all internal nodes of a triangulation $T$ with f-vector
  $\langle t,f_\s,n_\i \rangle$
leads to a
  total increase $\Delta$ of internal edges which is bounded by
  \begin{equ}
    \Delta \le C_\Delta (t +n_\i)~.
  \end{equ}
\end{corollary}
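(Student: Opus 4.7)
The proof is essentially a bookkeeping argument. The plan is to sum the per-depth estimate of \pref{p:theprop} over all $d \ge 0$, reduce the result to a linear combination of $e_\i$, $f_\i$ and $n_\s$ via the counting identities \eref{e:sum}, and then eliminate those three quantities using the Euler relations \eref{e:euler}. The crucial observation is that after the Euler substitution, the surviving dependence on $f_\s$ has a favorable sign, so that the inequality $f_\s \ge 4$ (built into the definition of an f-vector) suffices to absorb it.

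Concretely, summing \eref{e:Deltad} over $d\ge 0$ and re-indexing $\sum_{d \ge 1}(a_{d-1}+c_{d-1}) = \sum_{d\ge 0}(a_d + c_d)$ yields
\begin{equ}
  \Delta \;=\; \sum_d \Delta_d \;\le\; C'_\Delta\Bigl[\,\sum_d (a_d+b_d) \,+\, \sum_d a_d \,+\, \sum_d c_d \,+\, n_\s\Bigr].
\end{equ}
Now \eref{e:sum} bounds the first sum by $e_\i$, the second also by $e_\i$ (since $a_d \le a_d + b_d$), and the third by $f_\i$, so that
\begin{equ}
  \Delta \;\le\; C'_\Delta\,(2 e_\i + f_\i + n_\s).
\end{equ}

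A short calculation from \eref{e:euler} gives $n_\s = \tfrac12 f_\s + 2$, $f_\i = 2t - \tfrac12 f_\s$, and $e_\i = t - \tfrac12 f_\s + n_\i + 1$; substituting,
\begin{equ}
  2e_\i + f_\i + n_\s \;=\; 4t + 2n_\i - f_\s + 4 \;\le\; 4(t+n_\i),
\end{equ}
where the last step uses $f_\s \ge 4$. Hence $\Delta \le 4 C'_\Delta (t+n_\i)$, and the corollary follows with $C_\Delta := 4 C'_\Delta$.

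The real obstacle lies upstream, in \pref{p:theprop}, where the algorithms of \sref{s:c21} and \sref{s:c10} are quantified at each depth level (the $n_\s$ term at $d=0$ is the price of the first sweep, and the $a_{d-1}+c_{d-1}$ terms account for how promotion-paths at depth $d$ re-use edges incident to nodes already pushed to the surface). Once that input is in hand, the corollary reduces to the few lines of summation and Euler arithmetic above.
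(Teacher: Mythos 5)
Your proof is correct and follows the same route as the paper: sum the per-depth bound from \pref{p:theprop}, re-index, pass to $2e_\i + f_\i + n_\s$ via \eref{e:sum}, and eliminate via the Euler relations. The only cosmetic difference is that you absorb the additive constant $4$ using $f_\s \ge 4$ where the paper uses $t \ge 1$; both are valid, and your version simply makes explicit the arithmetic the paper leaves as "the coefficient of $f_\s$ is negative."
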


\begin{proof}[Proof of the corollary]
  From \eref{e:euler} we deduce $f_\i= 2t -f_s/2$ and
  $e=t+n_\i-f_\s/2+1$. Also, $n_\s=f_\s/2+2$.
 Using
  \eref{e:sum} and the proposition, we get
  \begin{equa}
    \Delta  &= \sum _{d\ge0} \Delta _d \le C'_\Delta (2e +
    f_\i+n_\s)~,\\
  \end{equa}
from which the assertion follows (the coefficient of $f_\s$ is negative and the
additive constants
can be bounded since $1\le t$).
\end{proof}
\begin{proof}[Proof of \pref{p:theprop}]
  The key to the bound
  \eref{e:Deltad} is the observation that the transformations
  C2$\to$C1$\dots$ are \emph{local} in the depth $d$ one
  considers. Indeed, as is visible from the definition of these
  transformations, working at level $d$ only affects $a_d$, $b_d$,
  $c_d$ and $a_{d-1}$, $c_{d-1}$.

More precisely, when starting to work at level $d$, we need the value of $\hat
a_{d-1}$, which is the number of internal edges (connecting depth
$d-1$ to $d$)
obtained when level $d-1$ has been completed.

As we work on level $d$, these values continue to change. After the
sweep C2$\to$C1 at level $d$ we obtain
$a'_d$, $\hat a'_{d-1}$, and similarly for the other variables. After
the sweep C1$\to$C0 we obtain $a''_d$ and other variables. The sweep
of removing the tetrahedra after C0 decreases all the counters, so we
do not introduce new notation.

The main bound is
\begin{lemma}\label{l:changead}
One has, after completing level $d-1$:
\begin{equ}\label{e:ad}
  \hat a_{d-1} \le a_{d-1} +16 c_{d-1}~.
\end{equ}
\end{lemma}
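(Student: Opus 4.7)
}
The plan is to trace, through iteration $d-1$, the evolution of the collection of internal edges with one endpoint at original depth $d-1$ (or a split-descendant thereof) and the other at original depth $d$. The initial count of such edges is $a_{d-1}$; the goal is to show that by the end of the sweeps C2$\to$C1, C1$\to$C0 and C0$\to$external at level $d-1$, this count has grown by at most $16c_{d-1}$.

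First I would observe that of the three sweeps composing iteration $d-1$, only \emph{split-a-node-along-a-path} can introduce new edges: \emph{remove-1-tetra} deletes one tetrahedron but neither creates nor destroys edges, only altering their internal/external status. A split at a current-external node $n_*$ along a path $\gamma\subset\II(n_*)$ introduces exactly one new edge $(n_{*,\L},n_{*,\R})$, together with one new edge of the form $(n_{*,?},u)$ for every interior node $u$ of $\gamma$. I would then recall from Theorems~\ref{t:c2c1} and~\ref{t:c1c0} that every internal edge of each hemisphere lies in at most one path per sweep, which bounds the total number of new edges introduced during iteration $d-1$.

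Next I would identify the original depths of the endpoints of these new edges. Throughout the sweeps C2$\to$C1 and C1$\to$C0 at level $d-1$, no originally-depth-$(d-1)$ node is yet external: such nodes transition to the boundary only in the final C0$\to$external sweep, during which no further splits occur. Hence every $n_*$ being split is originally at depth at most $d-2$, and so are both descendants $n_{*,\L},n_{*,\R}$. The interior nodes $u$ of $\gamma$ lie in $\II(n_*)$, at distance $1$ from $n_*$, and therefore at original depth at most $d-1$. This shows that the new edges either join two nodes originally at depth $\le d-2$, or join a depth-$(\le d-2)$ node to a depth-$(d-1)$ node; none of them directly contributes a fresh $(d-1,d)$-edge to $\hat a_{d-1}$.

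Third, I would account for the slack $16c_{d-1}$ via a local charging scheme. The edges in $\hat a_{d-1}\setminus a_{d-1}$ can arise only through the relabelling that a depth-$(d-1)$ internal node undergoes as it is promoted by paths passing through it: when such a node $x$ with $D_x=d-1$ lies on two different paths (at most one per sweep, by Theorems~\ref{t:c2c1}--\ref{t:c1c0}), the newly created edges $(n_{*,?},x)$ may be re-examined against $x$'s pre-existing depth-$d$ neighbours $y$, and the accounting of the edge $(x,y)$ may need to be multiplied by the local combinatorial factor coming from the $\L/\R$ bifurcation of the binary-tree decomposition. Each face $(x,y,z)$ counted in $c_{d-1}$ is incident to at most two tetrahedra and hence to at most a bounded number of such interactions. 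A crude enumeration (two sweeps, two sides $\L$/$\R$ per split, at most four configurations of the face within an adjacent tetrahedron) yields the multiplicative constant $16$.

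The main obstacle will be making the charging in the third step rigorous and loss-less: I must verify that every edge in $\hat a_{d-1}\setminus a_{d-1}$ can indeed be charged to a face of $c_{d-1}$, and that no face receives more than $16$ such charges. This is ultimately a book-keeping argument, but one that has to be carried out in parallel with the binary-tree decomposition of Section~\ref{s:c21}--\ref{s:c10}, because the same edge of $\II(n_*)$ may be re-used as a boundary edge in two different sub-pieces $\KK_\S$ appearing in successive cuts.
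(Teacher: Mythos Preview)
Your second step contains a basic indexing error that invalidates the rest of the plan. In the paper's convention, ``level $d-1$'' means $\LL=\LL_{d-1}$: the nodes being split are exactly the originally-depth-$(d-1)$ nodes, which became external at the end of level $d-2$. They are \emph{not} at depth $\le d-2$ as you assert. Consequently each split descendant $n_{*,\L},n_{*,\R}$ also has original depth $d-1$, and the interior nodes $u$ of a path $\gamma\subset\II(n_*)$ can have original depth $d-2$, $d-1$, or $d$. When $u$ has depth $d$, the freshly created edge $(n_{*,?},u)$ \emph{is} a new $(d-1,d)$-edge and contributes directly to $\hat a_{d-1}-a_{d-1}$. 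Your conclusion that ``none of them directly contributes a fresh $(d-1,d)$-edge'' is therefore false, and the vague charging scheme of your third step is trying to account for a quantity whose source you have misidentified.

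The paper's argument is much shorter and does not proceed by a charging scheme at all. It invokes Lemma~\ref{l:changec21}(b),(c) at level $d-1$: the sweep C2$\to$C1 gives $a'_{d-1}\le a_{d-1}+2c_{d-1}$ and $c'_{d-1}\le 7c_{d-1}$. Then the same reasoning that proves \eref{e:changec21b}---namely, that the number of new edges $(n_{*,\ss},y)$ with $y$ at depth $d$ is bounded by the number of faces containing the original edge $(n_*,y)$, summed to $2c_{d-1}$---applies verbatim to the sweep C1$\to$C0, yielding $a''_{d-1}\le a'_{d-1}+2c'_{d-1}$. Combining,
\[
a''_{d-1}\le a_{d-1}+2c_{d-1}+2\cdot 7c_{d-1}=a_{d-1}+16c_{d-1},
\]
and $\hat a_{d-1}\le a''_{d-1}$ since C0$\to$external only removes edges. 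The constant $16$ is thus $2+2\cdot 7$, not a product of four binary factors as you speculate.
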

Postponing the proof, we recall the following facts:
\begin{myitem}
  \item Cutting along a path $\gamma$ adds $|\gamma|-1$ internal edges
    to the triangulation (see \lref{l:split}).
\item Each edge of each $\II(n_*)$ is used in at most 1 path
  $\tilde\gamma_\ss$
(see
  \tref{t:c2c1} and \ref{t:c1c0}).
\item The extension of the path $\tilde\gamma_\ss$ to a splitting path
  $\gamma_\ss$ adds at most 2 to its length (see step 7 for the case
  C2$\to$ C1, and step 6 for the case C1$\to$C0). We will use this
  observation by saying that $|\gamma_\ss| -1 \le 2 |\tilde\gamma_\ss|$.
\end{myitem}

Using these facts and \lref{l:euler2d}, the increase $\Delta_d'$ of
the number of internal edges due to the sweep
C2$\to$C1 is bounded by
\begin{equa}[e:deltaprim]
 \Delta' _{d} \le & 2\sum_{n_* \in \LL _d} \sum_\S |\tilde\gamma _{\S}|
 \le  2\sum_ {n_* \in \LL _d} \# \left( \text{edges in } \II(n_*) \right)\\
 \le & 6a_{d} + 12 b_d + 6 \hat a _{d-1} + 2\sum_{n_* \in \LL _d} (|\EE(n_*)|
-3 )~,
\end{equa}
Here, we over-count the number of added internal edges. However, one should keep in mind that if we follow the algorithms of 
Sections~\ref{s:c21} and \ref{s:c10}, then the Theorems~\ref{t:c2c1} and \ref{t:c1c0} are valid and 
every new internal edge is accounted for. As a consequence, the Relation~\eref{e:deltaprim} is an upper bound on the number of internal edges due to the sweep
C2$\to$C1.

The effect of the sweep  C2$\to$C1 at level $d$ is summarized by
\begin{lemma}\label{l:changec21}
 One has
\minilab{e:changec21}
\begin{equs}
 a'_d + b'_d + a'_{d-1} &\le a_d + b_d + \hat a_{d-1} + \Delta' _{d}~,
\label{e:changec21a}\\
 a'_d &\le a_d + 2c_d~, \label{e:changec21b}\\
 c'_d &\le 7c_d~.\label{e:changec21c}
\end{equs}

\end{lemma}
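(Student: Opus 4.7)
The plan is to analyze the effect of one split-a-node-along-a-path applied at level $d$, and then sum the contributions over all cuts produced by the C2$\to$C1 binary-tree algorithm for each $n_*\in\LL_d$.

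For \eref{e:changec21a}, the key observation is that a single split of $n_*$ creates or destroys internal edges only at endpoints in $\{n_*, n_{*,\L}, n_{*,\R}\}$. The new nodes $n_{*,\L}, n_{*,\R}$ inherit the depth $d$ of $n_*$, and any other endpoint lies either at depth $d+1$ (internal neighbors of $n_*$; those of depth $\le d$ have been processed at earlier levels, while deeper ones are not yet at distance $1$) or at some depth $\le d$ (external neighbors in $\EE(n_*)$). Consequently only the edge counters $a_{d-1}, b_d, a_d$ can change during the sweep, and the net change in their sum equals the net change in the total number of internal edges, namely $\Delta'_d$.

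For \eref{e:changec21b}, I would use \lref{l:split}: one split along $\gamma$ increases the total number of internal edges by $|\gamma|-1$. Since every interior node of a level-$d$ splitting path lies at depth $d+1$, each new internal edge is an $a_d$-edge, so $a'_d-a_d=\sum_\S(|\gamma_\S|-1)$. The extension bound $|\gamma_\S|-1\le 2|\tilde\gamma_\S|$ from step~7 together with the disjointness statement of \tref{t:c2c1} (each interior edge of $\II(n_*)$ lies in at most one $\tilde\gamma_\S$) gives $a'_d-a_d\le 2\sum_{n_*\in\LL_d}\#\{\text{interior edges of }\II(n_*)\text{ used by some }\tilde\gamma_\S\}$. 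Each such traversed edge has at least one interior endpoint of the hemisphere (an internal $3$D node at depth $d+1$) and therefore corresponds to an internal face of $T$ containing the depth-$d$ vertex $n_*$ and a depth-$(d+1)$ vertex, i.e., a $c_d$-face; the factor $2$ absorbs the multiplicity with which a single $c_d$-face appears as an interior edge of several hemispheres $\II(n)$ with $n\in\LL_d$.

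For \eref{e:changec21c}, I would count new $c_d$-faces per split: the cone tetrahedra $(n_{*,\L}, n_{*,\R}, e)$ with $e\in\gamma$, plus the two copies of each original $n_*$-tetrahedron redistributed on the two sides of $\gamma$. An old $c_d$-face can be duplicated at most twice (once per side of $\gamma$), while each cone tetrahedron creates only a bounded number of new internal faces incident to the new depth-$d$ nodes; combining with the path-length bound gives $c'_d\le K\,c_d$ for some small constant $K$, and the crude value $K=7$ safely covers the combinatorial overhead. The main obstacle will lie in \eref{e:changec21b}, where one must translate the $2$D disjointness statement of \tref{t:c2c1} into the $3$D face counter $c_d$, accounting both for the multiplicity of a single $c_d$-face across several hemispheres and for the extension $\tilde\gamma_\S\to\gamma_\S$ performed in step~7.
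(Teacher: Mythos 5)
Your handling of \eref{e:changec21a} is essentially the paper's: both argue that every internal edge added during the C2$\to$C1 sweep at level $d$ has one endpoint at depth $d$ (a child $n_{*,\ss}$ of some $n_*\in\LL_d$), so only the counters $a_d$, $b_d$, $a_{d-1}$ can change, and the net increase is bounded by $\Delta'_d$.

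Your argument for \eref{e:changec21b} has a genuine gap in the bookkeeping. You arrive at $a'_d-a_d\le 2\sum_{n_*\in\LL_d}\#\{\text{interior edges of }\II(n_*)\text{ used by some }\tilde\gamma_\S\}$, with the factor $2$ coming from the extension of step~7. To conclude you need the sum itself to be at most $c_d$, i.e., an injective map from used interior edges to $c_d$-faces. But, as you yourself note, a $c_d$-face $(n_*,y,z)$ with both $n_*$ and $z$ at depth $d$ and $y$ at depth $d+1$ is seen as an interior edge $(y,z)$ of $\II(n_*)$ \emph{and} as an interior edge $(n_*,y)$ of $\II(z)$: the map has multiplicity up to $2$. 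That gives $\sum_{n_*}\#\{\text{used}\}\le 2c_d$, hence $a'_d-a_d\le 4c_d$, not $2c_d$; a single factor of $2$ cannot absorb both the extension overcount and the multiplicity. The paper's argument never passes through path lengths at all. It fixes a pair $(n_*,y)$ with $n_*\in\LL_d$ and $y$ at depth $d+1$, and observes that the number of new edges $(n_{*,\ss},y)$ equals the number of cuts through $y$ in $\II(n_*)$, which is bounded by the degree of the edge $(n_*,y)$ because each edge of $\II(n_*)$ is used in at most one path (\tref{t:c2c1}); summing these degrees over all such pairs gives $2c_d$, with the factor $2$ arising \emph{only} from the multiplicity of $c_d$-faces.

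For \eref{e:changec21c}, your text is a plan rather than a proof: ``$c'_d\le K c_d$ for some small constant $K$, and the crude value $K=7$ safely covers the combinatorial overhead'' is not a derivation of the stated inequality. The paper obtains $7$ exactly by observing that each new internal face $(n_{*,\ss},y,z)$ must contain one of the new internal edges $(n_{*,\ss},y)$ already counted in \eref{e:changec21b}, and that each new internal edge is created together with exactly three internal faces, whence $c'_d-c_d\le 3(a'_d-a_d)\le 6c_d$.
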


Postponing the proof, we proceed to the sweep C1$\to$C0. In the same manner,
the increase $\Delta''_d$ of internal edges for the
sweep C1$\to$C0 at level $d$ is
\begin{equa}[e:deltaprimprim]
 \Delta'' _{d} \le & 2\sum_{n_* \in \MM _d} \sum_\S |\tilde\gamma _{\S}|  \le
2\sum_ {n_* \in \MM _d} \# \left( \text{edges in } \II(n_*) \right)\\
 \le & 6a'_{d} + 12 b'_d + 6 a'_{d-1} + 2\sum_{n_* \in \MM _d}
(|\EE(n_*)|-3)~.
\end{equa}

To complete the proof of \pref{p:theprop}
we note that the external degree of $n_*$ is always 3 for those nodes
which have been promoted to the surface by removing a
tetrahedron. (Those which were at the surface at level $d=0$ can of
course have higher degree.) Using \eref{e:deltaprim} and  \lref{l:changead} we
get
\begin{equa}[e:deltad]
 \Delta' _{d}
 \le & 6a_{d} + 12 b_d + 6 \hat a_{d-1} + 4e_\s \cdot \delta_{d=0}  ~\\
 \le & 6a_{d} + 12 b_d + 6 a_{d-1} +96 c_{d-1}  + 12 n_\s \cdot \delta_{d=0}
{}~\\
 \le & 96(a_d + b_d + a_{d-1} + c_{d-1}+n_\s \cdot \delta_{d=0})~.
\end{equa}

In \eref{e:deltaprimprim} the external degree of a node $n_* \in \MM_d$ can be
larger than 3. However, if
we split a node $n_*$ into $n_{*,\R}$ and $n_{*,\L}$, then the
external degrees satisfy
\begin{equ}\label{e:extdegsplit}
 |\EE(n_*)| = |\EE(n_{*,\L})| +|\EE(n_{*,\R})| - 4~.
\end{equ}
Therefore, we can bound
\begin{equs}
  \sum_ {n_* \in \MM _d} (|\EE(n_*)| - 3) &\le 4\cdot \#\left( \text{splits in
C2$\to$C1}  \right)~.
\end{equs}
Since each split adds at least one internal edge, we deduce that
\begin{equs}
    \sum_ {n_* \in \MM _d} (|\EE(n_*)| - 3) &\le 4 \Delta' _{d} ~.
\end{equs}
Combining this with \eref{e:changec21a} and \lref{l:changead}, we get
\begin{equa}
 \Delta'' _{d} \le &  6a'_{d} + 12 b'_d + 6 a'_{d-1} + 2\sum_{n_* \in \MM _d}
(|\EE(n_*)|-3)\\
               \le &  6a_{d} + 12 b_d + 6 a_{d-1} + 96c_{d-1} + 20 \Delta'_d~.
\end{equa}
Replacing $\Delta'_d$ with \eref{e:deltad} yields the result we seek.
The last step C0$\to$external adds no internal edges; in fact it
reduces their number. This finishes the proof of \pref{p:theprop}.
\end{proof}

\begin{proof}[Proof of \lref{l:changec21}]
  In the sweep C2$\to$C1 at level $d$, we split all (or some) nodes $\{n_*\}
\subset \LL_d$ into $\{n_{*,\ss}\}\subset\MM_d$. As a consequence, all
$\Delta'_d$ added internal edges
 have an end $n_{*,\ss}$ which is the child of some node $n_*\in\LL_d$ at depth
$d$ in the initial triangulation. The number of internal edges having a corner
at depth $d$ is given by
 $a_d + a_{d-1}+b_d$. This proves the relation \eref{e:changec21a}.

To prove \eref{e:changec21b}, we need to bound the added number of internal
edges $(n_{*,\ss},y)$ such that $n_* \in \LL_d$ and $y\in\II(n_*)\cap\LL_{d+1}$
was at depth
$D_y=d+1$ in the original triangulation (and therefore is at depth 1 in the
current step).
By construction, this number is bounded by the number of paths $\gamma_\ss$
which go through such a node $y$ in the 2d triangulation
$\II(n_*)$. Furthermore, by \tref{t:c2c1}, each edge of $\II(n_*)$ is used in
at most one path $\gamma_\ss$. We deduce that, for two such nodes
$n_*$ and $y$, the number of added internal edges of type $(n_{*,\ss},y)$ is
bounded by the degree of the edge $(n_*,y)$ in the original triangulation.
Summing the degrees of all edges $(n_*,y)$ such that $n_* \in \LL_d$ and $y\in
\II(n_*) \cap \LL_{d+1}$ is bounded by $2c_d$.

Finally, in order to prove \eref{e:changec21c},  we need to bound the added
number of internal faces $(n_{*,\ss},y,z)$ in the step C2$\to$C1 at level $d$
when
$n_{*,\ss}$ is obtained from splitting some $n_* \in \LL_d$ and $y \in
\LL_{d+1}\cap\II(n_*)$. But each added internal face $(n_{*,\ss},y,z)$ requires
the addition of
the  internal edge $(n_{*,\ss},y)$.
Furthermore, by definition of the move split-a-node-along-a-path, each new
internal edge is added along with three internal faces. We deduce that $c'_d -
c_d \le 3(a'_d - a_d) \le 6c_d$.
\end{proof}

\begin{proof}[Proof of \lref{l:changead}]
The proof follows by induction on the level $d$. At level $d-1=0$,
there is nothing to prove (since no splits have been done). When we
are at level $d-1>0$ we can use \pref{p:theprop} (at level $d-1$).
 Following the same reasoning used in the proof of \eref{e:changec21b}, we can write
\begin{equ}
 a''_{d-1} \le a'_{d-1} + 2c'_{d-1}~.
\end{equ}
Replacing \eref{e:changec21b} and \eref{e:changec21c}, we get
\begin{equs}
 a''_{d-1} &\le a_{d-1} + 16c_{d-1}~.
\end{equs}
Finally, $\hat a_{d-1} \le a''_{d-1}$
since the third step C0$\to$external does not add internal edges (this third
step actually removes $3|\LL_{d-1}|$ internal edges).

\end{proof}

\subsection{Reducing a triangulation with no internal nodes into a
  set of nuclei}
Let $T$ be any triangulation. In the previous section, we described an
algorithm which transforms
$T$ into a new triangulation $T'$ with no internal nodes. We now systematically
apply the moves
cut-a-3-face and open-a-2-face on every internal face of $T'$ with less than 2
internal edges. We end
up with a collection of triangulations $\{N_i\}$ satisfying the
following properties:
\begin{myitem}
 \item All nodes of any such $N_i$ are external.
 \item All internal faces of any such $N_i$ have at least 2 internal edges.
\end{myitem}

Any triangulation satisfying these two conditions is called a \emx{nucleus}.

\section{Part II: Bounding the number of triangulations}\label{s:part2}
We showed that any triangulation can be reduced into a collection
of nuclei using four moves.
For the moment, we proceed without using the move cut-a-3-face. This implies
that
any triangulation can be transformed into a ``tree of nuclei'' (the
formal definition of a tree of nuclei will be given later on)
using the three remaining moves. Equivalently, this shows that any
triangulation can be
constructed from a tree of nuclei, using the inverse of these three
moves. Bounding the number of trees of nuclei, and then bounding the
number of ways one can perform the inverse moves
on such a tree yields a bound on the total number of triangulations.
\subsection{Rooted triangulations}
We define what we mean by a rooted triangulation $T$ and we show that one
can label all external nodes of $T$. In the sequel, we use a particular labeling described below.

\begin{definition}
 A \emx{rooted triangulation} $(T,F)$ of the 3-ball is a triangulation $T$ with one labeled
external face $F$. This labeled face is called the \emx{root}. The three nodes of
the root
 are always labeled 0, 1, and 2.
\end{definition}

  \begin{figure}[h]
    \begin{center}
      \includegraphics[width=0.5\textwidth]{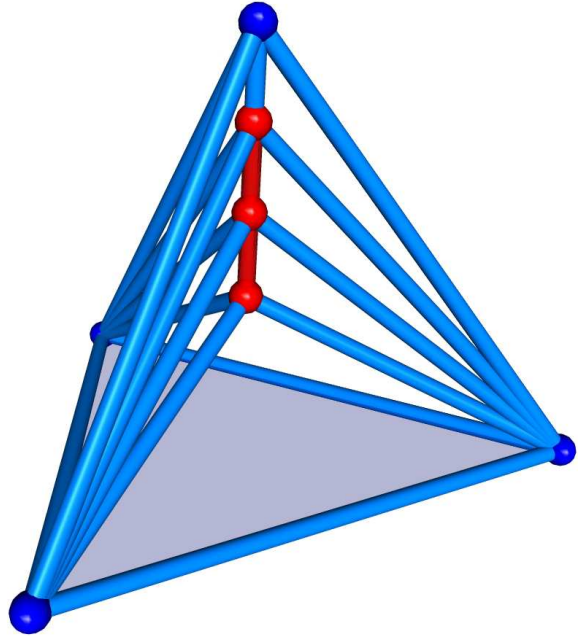}
      \caption{The Christmas tree with $m=3$ internal nodes. This triangulation
can be rooted in more than one way.}
      \label{fig:christmas}
    \end{center}
  \end{figure}

\begin{remark}
 We will only consider rooted triangulations. This means for instance that
talking about the Christmas tree $T_m, m>1$
makes no sense, since there is
more than one such
 rooted triangulation. The exceptions
are of course symmetric triangulations $T$ such as the tetrahedron.
\end{remark}

\begin{proposition}\label{p:labeling}
 Consider the boundary of a rooted triangulation $(T,F)$. The root
 is labeled as $(0,1,2)$. One can define a way of labeling
all external nodes of $T$.
\end{proposition}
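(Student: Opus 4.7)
The plan is to give an explicit deterministic algorithm that extends the labels $0,1,2$ on the root face $F$ to all external nodes, and then verify that the algorithm terminates with every external node receiving a unique label.

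First I would record the key structural facts that make the algorithm go. The boundary $\partial T$ is a triangulation of $S^2$ and is therefore connected; moreover, the ordered triple $(0,1,2)$ of nodes of $F$ induces a local orientation of $F$, which extends uniquely to a consistent orientation of every face of $\partial T$. Relative to this orientation, every directed external edge $(a,b)$ has a unique external face on its left, of the form $(a,b,c)$ for a unique third node $c$.

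Next, I would describe the labeling as a breadth-first traversal. Initialize a FIFO queue with the three directed edges $(0,1)$, $(1,2)$, $(2,0)$ of $\partial F$, oriented so that $F$ lies on their right (equivalently, so that the face on the left is the one to be discovered). Set the counter $k=3$. At each step, pop the front directed edge $(a,b)$ and form the unique face $(a,b,c)$ on its left. If $c$ carries no label yet, assign it the label $k$ and increment $k$. In either case, enqueue the directed edges $(b,c)$ and $(c,a)$, in that order. Stop when the queue is empty.

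I would then verify three properties. (i) \textbf{Termination}: $\partial T$ has finitely many faces; each external face $(a,b,c)$ is visited at most three times (once per directed boundary edge), so the total number of steps is bounded. (ii) \textbf{Coverage}: by connectivity of $\partial T$, every external face is reachable from $F$ by a chain of face-adjacencies, and a straightforward induction on this chain length shows every external face is eventually dequeued, hence every external node receives a label. (iii) \textbf{Well-definedness}: the queue ordering, the ``face on the left'' rule, and the ``smallest unused integer'' rule together make each choice deterministic, so the labeling depends only on $(T,F)$.

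The only subtle point — and I would flag it as the main thing to check carefully — is the assertion that the orientation induced by $(0,1,2)$ on $F$ extends globally to $\partial T$ in a way that makes ``the face on the left of $(a,b)$'' unambiguous. This is the classical orientability of a triangulated 2-sphere, which I would invoke as a standard fact rather than prove from scratch; the rest of the argument is bookkeeping.
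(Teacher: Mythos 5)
Your approach is valid but genuinely different from the paper's. The paper processes nodes in label order: having labeled $0,\dots,k$, it takes the external flower of node $k$ (a cycle on $\partial T$), finds the lexicographically smallest edge in it with both endpoints already labeled, and uses the direction along that edge to walk around the flower, assigning the next free labels to the unlabeled nodes in turn. The invariant that makes this work is that whenever a node is first labeled it lies in an external face together with two already-labeled nodes, so when its turn comes its flower contains a labeled edge. Notably, this never appeals to a \emph{global} orientation of $\partial T$: each flower is directed independently and locally. Your breadth-first traversal over directed edges is a legitimate alternative canonical-labeling scheme, at the cost of invoking orientability of the triangulated $2$-sphere to make ``the face on the left'' globally well defined.

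However, as written your BFS does not terminate. In a consistent orientation, if $(a,b)$ has $(a,b,c)$ on its left then $(b,c)$ and $(c,a)$ \emph{also} have $(a,b,c)$ on their left, so enqueueing them sends you straight back to the same face, forever; the claimed bound of three visits per face fails. You should instead enqueue the reversed edges $(c,b)$ and $(a,c)$, whose left faces are the two neighbors of $(a,b,c)$ across the shared edges, and (for a clean termination count) mark directed edges as visited so that each of the $3f_{\rm s}$ of them is processed at most once. With that fix your termination, coverage, and well-definedness arguments go through.
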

\begin{proof}
The proof is just the construction of this labeling.
Any labeled edge can be seen as an element $(a,b) \in \mathbb{Z}_+^2$ with
$a<b$.\footnote{We use the notation $\mathbb{Z}_+=\{0,1,2,\dots\}$.} We
consider the lexical order on $\mathbb{Z}_+^2$.
We start with the node 0. Its external flower is a 1d triangulation of the
circle $S^1$ and it contains the edge $(1,2)$ by definition. This edge
determines the direction
 in which we label all unlabeled nodes of the external flower of node 0.

Next, we consider the external flower of node 1 and we look for the smallest
labeled edge in the sense of the above ordering. In this case, this edge is
$(0,2)$. This edge fixes the direction
in which we label all the yet  unlabeled nodes of the
 external flower of node number 1.
 Notice that all unlabeled nodes which are assigned a label are part of a face
along with 2 already labeled nodes.
 This implies that the external flower of any labeled node contains a
smallest labeled edge and as such can be directed.

 We continue with all the nodes in their natural order until all
 external nodes of $T$ are labeled.
\end{proof}

\subsection{Trees of nuclei}

Since we work with rooted triangulations, from now on, we will only use rooted nuclei, namely:
\begin{definition}
 A \emx{nucleus} is a \textbf{rooted} triangulation with no internal nodes
such that every internal face has at most one external edge.
\end{definition}

\subsubsection{Rooted trees of nuclei and planar rooted trees}

Let $\mathcal{N}$ be the set of all nuclei and $\mathcal{N}_{t,f}$ be
the subset
of all nuclei with $t$ tetrahedra and $f$ external faces.
\begin{definition}
 A rooted triangulation $T$ is called a \emx{rooted tree of nuclei} if all
nodes of $T$ are external and all internal faces of $T$ have 0, 2, or
3 internal edges. (In other words, no internal face has 2 external edges.)
\end{definition}

In other words, a rooted tree of nuclei is simply a rooted triangulation
which is obtained by gluing sequentially nuclei along pairs of their external
faces.
This is done in such a way that each nucleus is glued to an external face $(a,b,c)$ of
its parent through its
root; 0 is identified with $a$, 1 with $b$ and 2 with $c$.
Once the tree is built, the external nodes are renumbered in the sense
of \pref{p:labeling}.

Since all external faces of a rooted triangulation are ordered, this defines
a bijection between rooted trees of nuclei $(T,F)$ and rooted planar trees with
colored vertices in the following manner:
\begin{myitem}
 \item Each nucleus of the triangulation $(T,F)$ is represented by a colored
vertex.
 \item The root-vertex of the planar tree represents the nucleus with the root
$F$, \ie,  with the face $(0,1,2)$.
 \item Each internal face of the triangulation with three external edges is
shared by two nuclei and hence it is represented in the tree by an
edge linking the
corresponding two colored vertices.
 \item Since the internal faces with three external edges are ordered,
this induces an order of the
links of the planar tree, say from left to right.

\end{myitem}
\subsubsection{Hypothesis on the number of rooted nuclei}

We next show how the question of Gromov can be reformulated. We show
that if there are not ``too many'' different types of nuclei, then
there is indeed an exponential bound on the number of triangulations,
when expressed in terms of the number of tetrahedra.

\begin{hypothesis}\label{h:assumption}
There is a finite constant $K_1>1$ such that the number $\rho(t,f_\s)$ of
face-rooted nuclei  with f-vector $\langle t,f_\s,0\rangle$ is bounded by $K_1^{t}$.
\end{hypothesis}

In order to alleviate the notation, from now on, we will denote $f_\s$ by $f$.

\begin{lemma}\label{l:nucleus}
For any nucleus $N\in\NN_{t,f}$ one has $f\le t+3$.
\end{lemma}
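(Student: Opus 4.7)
The plan is to combine the Euler relations \eref{e:euler} (using $n_\i = 0$, which holds for any nucleus) with the defining constraint on internal faces of a nucleus via a double counting argument. From \eref{e:euler} I will first extract the two identities $f_\i = 2t - f/2$ and $e_\s = 3f/2$.

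Next, let $k_e$ denote the number of tetrahedra of $N$ containing an edge $e$. I count the pairs $(F,e)$ with $F$ an internal face of $N$ and $e$ an external edge of $F$. A fixed external edge $e$ has $k_e+1$ faces in its star (arranged like pages of a book around the spine $e$), exactly two of which are the external faces of $\partial N$ at $e$; so $e$ lies in exactly $k_e-1$ internal faces. On the other hand, the nucleus property bounds the number of external edges of any internal face by $1$. Double counting yields
\begin{equ}
\sum_{e~\text{external}} (k_e-1) \le f_\i~.
\end{equ}

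The key extra input is that $k_e \ge 2$ for every external edge $e$ whenever $t \ge 2$. I will prove this by local case analysis: if $k_e = 1$ for some $e = (c,d)$, the unique tetrahedron $\tau = (a,b,c,d)$ containing $e$ must carry both external faces of $\partial N$ at $e$, namely $(a,c,d)$ and $(b,c,d)$. Inspecting the remaining two faces $(a,b,c)$ and $(a,b,d)$ of $\tau$: either both are external, in which case $\tau$ alone constitutes the nucleus and $t = 1$; or at least one of them is internal, in which case that internal face carries at least two of the edges from $\{(a,b), (a,c), (a,d), (b,c), (b,d)\}$, each of which is forced to be external (as it lies in one of the external faces of $\tau$), violating the nucleus property. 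Hence $k_e = 1$ forces $t = 1$.

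Putting it together: for $t \ge 2$ we have $e_\s \le \sum_{e~\text{ext}}(k_e - 1) \le f_\i$, i.e.~$3f/2 \le 2t - f/2$, yielding $f \le t \le t + 3$. For $t = 1$, the nucleus is a single tetrahedron with $f = 4 = t + 3$. In both cases $f \le t + 3$. The main obstacle is the case analysis giving $k_e \ge 2$ for $t \ge 2$, but it is entirely mechanical since only the combinatorics of a single tetrahedron under the nucleus constraint are involved.
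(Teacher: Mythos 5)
Your proof is correct in substance, but it takes a genuinely different route from the paper's. The paper argues directly at the level of tetrahedra: it observes that in a non-trivial nucleus every tetrahedron has at most one external face (since a tetrahedron with two or more external faces, not being the whole nucleus, would have an internal face containing at least two external edges), which immediately gives $f\le t$, and $f=4=t+3$ for the single tetrahedron. You instead double-count incidences between internal faces and external edges, convert $e_\s$ and $f_\i$ to functions of $t$ and $f$ via the Euler relations with $n_\i=0$, and supply the auxiliary local fact $k_e\ge 2$ for $t\ge2$ to make $\sum_e(k_e-1)\ge e_\s$. Both arrive at $f\le t$ in the non-trivial case. The paper's argument is shorter and needs no Euler bookkeeping; yours is more mechanical and would generalize to weaker nucleus-type constraints (e.g.~``at most $k$ external edges per internal face'') by simply changing the constant in the double count, which the paper's direct observation would not. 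One small imprecision to fix: in the $k_e\ge2$ step you assert that \emph{each} of $(a,b),(a,c),(a,d),(b,c),(b,d)$ is forced to be external, but $(a,b)$ need not be — it lies only in $(a,b,c)$ and $(a,b,d)$, neither of which is assumed external. The conclusion still holds because the internal face in question (say $(a,b,c)$) already contains the two genuinely external edges $(a,c)$ and $(b,c)$, giving the contradiction; you should restrict the ``forced to be external'' claim to the four edges $(a,c),(a,d),(b,c),(b,d)$ lying in the two known external faces.
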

\begin{proof}
  If $N$ is a tetrahedron, the assertion is obvious.
If $N$ is non-trivial each tetrahedron of $N$ can have at most
1 external face,
since otherwise it would have an internal face with more than one external
edge.
\end{proof}

\subsubsection{The number of rooted trees of nuclei}

We use the classical method for counting planar ordered trees,
generalized to the case of a multitude of different nodes, which are
the face-rooted nuclei.
\begin{definition}

Let $A_{v,t,f}$ be the number of rooted trees of nuclei with $v>0$ nuclei, $t$
tetrahedra and $f$ external faces. We define $A_{0,t,f} = \delta
_{t,0}\,\delta _{f,0}$.

\end{definition}

Our main bound is:
\begin{proposition}\label{p:k2}Under the Hypothesis \ref{h:assumption}
  there is a $K_2$, with $2<K_2<\infty$ such that for all $t,f$, one has
  \begin{equ}
    \sum_v A_{v,t,f}\le K_2^t~.
  \end{equ}
\end{proposition}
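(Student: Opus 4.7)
The plan is to translate the recursive structure of rooted trees of nuclei into a functional equation for a generating function, and then to apply the analytic implicit function theorem. I would set
\begin{equation*}
a(x) \;=\; \sum_{v\ge 1,\,t,\,f} A_{v,t,f}\,x^t.
\end{equation*}
Any nonempty rooted tree of nuclei decomposes uniquely as a face-rooted root nucleus $N_0$ (with $t_{N_0}$ tetrahedra and $f_{N_0}$ external faces) together with, at each of the $f_{N_0}-1$ non-root external faces of $N_0$, either nothing or a child subtree glued through its own root. These $f_{N_0}-1$ slots are distinguishable by the canonical labelling of \pref{p:labeling}, each gluing merges two external faces into one internal face, and tetrahedron counts add; tracking only $t$ this yields the functional equation
\begin{equation*}
a(x) \;=\; F(a(x),x), \qquad F(a,x) \;:=\; \sum_{t\ge 1,\,f\ge 4} \rho(t,f)\,x^t\,(1+a)^{f-1},
\end{equation*}
where $\rho(t,f)$ is the number of face-rooted nuclei with f-vector $\langle t,f,0\rangle$.

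Next, Hypothesis~\ref{h:assumption} and \lref{l:nucleus} give $\rho(t,f)\le K_1^t$ and $4\le f\le t+3$, so that
\begin{equation*}
|F(a,x)| \;\le\; \sum_{t\ge 1}\, t\,K_1^t\,|x|^t\,(1+|a|)^{t+2},
\end{equation*}
which converges on the polydisc $K_1|x|(1+|a|)<1$. Hence $F$ is analytic at $(0,0)$ with $F(0,0)=0$ and $\partial_a F(0,0)=0$ (since every nucleus has $t_{N_0}\ge 1$, every summand vanishes at $x=0$). Applying the analytic implicit function theorem to $a-F(a,x)=0$ then produces a unique analytic solution near $x=0$ vanishing at the origin, and by uniqueness of formal solutions this is $a(x)$; in particular $a(x)$ has strictly positive radius of convergence $r_*>0$.

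For the coefficient bound, I would fix any $0<r<r_*$. Since all $A_{v,t,f}\ge 0$,
\begin{equation*}
\sum_v A_{v,t,f} \;\le\; \sum_{v,f'} A_{v,t,f'} \;\le\; r^{-t}\,a(r),
\end{equation*}
and any $K_2\ge\max\bigl(3,\,1/r,\,a(r)/r\bigr)$ satisfies $K_2^t \ge r^{-t}\,a(r)$ for all $t\ge 1$ as well as $K_2>2$, giving the required bound.

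The step I expect to be the main obstacle is the careful derivation of the functional equation in Step~1: one must verify that the bivariate generating function tracking both $t$ and $f$ collapses as claimed at $y=1$, with the two external faces lost per gluing combining with the child subtree's own external-face count so that each non-root slot contributes exactly $1+a(x)$. Once this bookkeeping is settled, the remaining analytic combinatorics is entirely standard.
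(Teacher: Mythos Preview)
Your proposal is correct and takes a genuinely different route from the paper. The paper works with a truncated three-variable generating function $A_M(s)=\sum_{3v+3t+f\le M}A_{v,t,f}\,s^{3v+3t+f}$, derives from the recursion \eref{e:encore2} only an \emph{inequality} for $A_M$ in terms of $A_{M-1}$, and then closes the argument by an explicit contraction: for $s^*=\min(0.1,1/(2K_1))$ the map $x\mapsto 1+2/(1-s^*x)$ sends $[1,5]$ into itself, so $A_M(s^*)\le 5$ for all $M$. You instead collapse to the single variable $t$, obtain the exact functional equation $a=F(a,x)$, and invoke the analytic implicit function theorem at $(0,0)$ (using $\partial_aF(0,0)=0$ via $t_0\ge1$, and convergence of $F$ via \lref{l:nucleus}). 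Your bookkeeping worry is harmless: summing the recursion \eref{e:encore2} over $f$ makes the constraint $\sum_i\bigl(\delta_{f_i>0}(f_i-1)+\delta_{f_i=0}\bigr)=f-1$ disappear, since every choice of $(f_i)$ determines exactly one $f$; each slot then contributes $\sum_{v_i,t_i,f_i}A_{v_i,t_i,f_i}x^{t_i}=1+a(x)$ as you claim. What the two approaches buy: yours is the cleaner analytic-combinatorics argument and isolates precisely where Hypothesis~\ref{h:assumption} and \lref{l:nucleus} enter (namely, analyticity of $F$ near the origin); the paper's argument is fully elementary, avoids the implicit function theorem, and yields an explicit radius $s^*$ and hence an explicit $K_2$ in terms of $K_1$.
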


\begin{proof}

Consider a tree of nuclei, and
let $N$ be the nucleus containing the root $F$ and assume that
$N\in\mathcal{N}_{t_0,f_0}$.
Removing $N$ from the tree leads to $f_0-1$ rooted
trees of nuclei, some of which may be empty. We let $v_i$, $t_i$, and
$f_i$ denote the counters for the branch $i$.
Note that if a branch $i$ has 0 nuclei, \ie,
if $v_{i} = 0$, then, obviously,  $t_{i} = f_{i} = 0$.
Thus, we get the relations:
\begin{equa}\label{e:newrelation}
\sum_{i=1}^\ell v_i = v-1~,\qquad
\sum_{i=1}^\ell t_i &= t-t_0 ~,\qquad
 \sum _{i=1}^{f_0-1} \delta_{f_i > 0}(f_i-1) + \delta_{f_i=0} = f-1~.
\end{equa}
In the sequel, we denote by $\sum'_{v,t,f,t_0,f_0}$ the sum over the set
\begin{equ}
  \bigl\{
v_i,t_i,f_i~ \big | ~i=1,\dots,f_0-1~, v_i\ge 0, t_i\ge 0, f_i\ge0
\text{~and satisfying~\eref{e:newrelation}}
\bigr\}~.
\end{equ}
This observation allows us to write a recursive relation
\begin{equa}\label{e:encore2}
 A_{v,t,f} &= \delta _{v,0}\,\delta
_{t,0}\,\delta _{f,0}+ \sum_{t_0>0,f_0 \ge 4} \rho(t_0,f_0) ~
{\sum}'_{v,t,f,t_0,f_0}\  \prod_{i=1}^{f_0-1}A_{v_i,t_i,f_i}~.
\end{equa}
Fix $M \in \mathbb{Z}_+$, and assume that $v,t,f$ satisfy $3v+3t+f \le M$.
By \eref{e:newrelation}, we deduce
\begin{equ}
 3v_i+3t_i+f_i \le 3v-3 + 3t-3t_0 + f \le M-1 ~.
\end{equ}
We define
\begin{equ}
 A_{M} (s) = \sum_{3v+3t+f \le M} A_{v,t,f}s^{3v+3t+f}~.
\end{equ}

Clearly, $A_0(s) = 1$ for all $s$, $A_M(0)=1$ for all $M\ge0$, and for
a fixed $s$, $A_M(s)$ is an increasing sequence in $M$.

Multiplying \eref{e:encore2} by $s^{3v+3t+f}$ and summing, we get,
using \eref{e:newrelation}:
\begin{equs}[e:eqtrees]
 A_M(s) &= 1 + \sum_{3v+3t+f\le M}~\sum_{t_0 = 1} ^{t} \sum_{f_0 = 4} ^{f}
\rho(t_0,f_0) ~s^{3+3t_0+1-\sum _{i=1} ^{f_0-1} (\delta_{f_i > 0} -
\delta_{f_i=0})}\\
&\qquad\qquad
\times {\sum}'_{v,t,f,t_0,f_0}
\prod_{i=1}^{\ell}A_{v_i,t_i,f_i}s^{3v_i+3t_i+f_i}~.\\
\end{equs}
Using \lref{l:nucleus}, we have
\begin{equa}
  3+3t_0+1-\sum _{i=1} ^{f_0-1} (\delta_{f_i > 0} -
  \delta_{f_i=0})&\ge
 3+3t_0+1-(f_0-1)\cdot1 +0\\
&\ge 5+3t_0-f_0=2(t_0+3-f_0)+t_0+f_0-1\\
&\ge t_0+f_0-1~.
\end{equa}
Restricting to  $0\le s\le1$, this implies
\begin{equ}\label{e:encore1}
s^{3+3t_0+1-\sum _{i=1} ^{f_0-1} (\delta_{f_i > 0} - \delta_{f_i=0})} \le
s^{t_0 + f_0-1}~.
\end{equ}
Using now the Hypothesis \ref{h:assumption}, \ie, $\rho(t,f) \le K_1^{t}$, we
get from
\eref{e:eqtrees} and \eref{e:encore1}:
\begin{equa}
 A_M(s) -A_M(0) &\le  \sum_{t_0 = 0} ^{M} (sK_1)^{t_0} \sum_{f_0-1 = 0} ^{M}  ~
\prod_{i=1}^{f_0-1}sA_{M-1}(s)
        \le  \frac{1-(sK_1)^{M+1}}{1-sK_1}
\frac{1-\left(sA_{M-1}(s)\right)^{M+1}}{1-\left(sA_{M-1}(s)\right)}~.\\
\end{equa}
Restricting  $s$ further to $s\le 1/(2K_1)$ this leads to
\begin{equa}
 A_M(s) -A_M(0) &\le
2\frac{1-\left(sA_{M-1}(s)\right)^{M+1}}{1-\left(sA_{M-1}(s)\right)}~.\\
\end{equa}
Fix $s^* = \min (0.1,1/(2K_1))$ and consider the map $F: x\mapsto
1+2/(1-s^*\cdot x)$. One easily checks
that $F$ maps the interval $[1,5]$ to itself. Furthermore, we have
$s^*\cdot x \le 1$ for $x\in [1,5]$.
Starting with $x=A_0(s^*)=1$ we conclude that for all $M$
one has $A_M(s^*) \le 5$. This implies that the monotone sequence $A_M(s^*)$
converges as $M\to\infty$ and thus
\begin{equ}
  A_{v,t,f} \le 5\cdot (s^*)^{-3v-3t-f}~.
\end{equ}
Summing over $v$ and using $v\le t$ and $f\le 4t$ we complete the proof.
\end{proof}

\subsection{Bound on triangulations}

Having discussed the number of trees, we now study the number of ways
these trees can be made into triangulations by identifying
faces and nodes. This process is patterned after the work of \cite{DJ1995} and
\cite{BZ2011}.

Our bounds are based on using the inverses of the moves
  open-a-2-face, remove-1-tetra, and split-a-node-along-a-path.
Since we are only
  interested in the bound, we will allow for inverse moves which do
  not necessarily lead to 3-balls.

  \begin{remark}
    While we over-count the number of triangulations, by allowing for
    moves which may not lead to 3-balls, we can in fact formulate
    precise conditions which guarantee that after each move, a  3-ball
    is obtained. These conditions are spelled out in Lemmas
    \ref{l:identify} and \ref{p:contract}. This actually allows for
    efficient programming of the inverse operations.
  \end{remark}

\subsubsection{Bounding the number of rooted triangulations with no internal
nodes}

Let $\mathcal{R}_{t,f}$ be the set of all rooted trees of nuclei with $t$
tetrahedra and $f$ external faces and let $\mathcal{T}_{t,f,0}$ be the set of
all rooted triangulations
with $t$ tetrahedra, $f$ external faces and no internal nodes.
In this section, we will define the inverse move of open-a-2-face and we will
use it
to count the number of rooted triangulations with no internal nodes.

The inverse operation of open-a-face, which we will simply call
\emx{identification} when there is no ambiguity, is to identify two adjacent
external faces,
satisfying some conditions. Indeed, identifying any two adjacent
external faces might lead to a complex which is not a triangulation. For
instance, assume that $(n_1,n_2,m_1)$ and $(n_1,n_2,m_2)$ are two adjacent
external faces
such that there exists a node $x$ adjacent to both $m_1$ and $m_2$. After
identifying the two faces, we obtain a complex with a double edge $(x,m_1) =
(x,m_2)$.

\begin{lemma}\label{l:identify}
 Consider a triangulation $T$. Let $(a,b)$ be an external edge and let $x,y$ be
its two opposite external nodes. Assume that the following conditions are
satisfied:
   \begin{myitem}
     \item The nodes $x$ and $y$ are not connected by an edge.
     \item The only nodes $m$ such that $(m,x)$ and $(m,y)$ are edges are the
two nodes $a$ and $b$.
   \end{myitem}
 Then, one can identify the two external nodes $x$ and $y$ as well as
 the two external faces sharing $(x,y)$. This operation
 transforms a 3-ball to a 3-ball, and will be called identification (of two adjacent external faces).
\end{lemma}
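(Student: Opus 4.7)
The plan is to verify two properties of the identification: that it yields a valid simplicial complex, and that the resulting complex is again a 3-ball. I would begin with the combinatorial check, then use the observation that identification is the inverse of open-a-2-face to obtain the 3-ball property.

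The identification of $x$ with $y$ automatically forces $(a,x)\sim(a,y)$, $(b,x)\sim(b,y)$, and the two triangles $(a,b,x)\sim(a,b,y)$. The potential failures of simplicial validity are: a loop at $x{=}y$, which would appear iff $(x,y)$ were already an edge of $T$ and is ruled out by condition (i); a double edge between $x{=}y$ and some $m\notin\{a,b\}$, which would appear iff such an $m$ were a common neighbor of both $x$ and $y$ and is ruled out by condition (ii); and double faces or tetrahedra, which would likewise require two or three common neighbors outside $\{a,b\}$, again excluded by (ii). Thus the two stated conditions are exactly what is needed.

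For the 3-ball property, the cleanest route is to observe that identification is the inverse of open-a-2-face. In the identified complex $T'$, the node $x{=}y$ carries the internal face $(a,b,x{=}y)$ (the merger of the two external faces) together with the internal edge $(a,b)$ and external edges $(x{=}y,a),(x{=}y,b)$---precisely the configuration required by open-a-2-face. The edge $(a,b)$ divides the hemisphere $\II(x{=}y)$ into the two sub-hemispheres coming from $\II(x)$ and $\II(y)$, and the split along $(a,b)$ reproduces the original pair $x,y$, recovering $T$ exactly. Since open-a-2-face is a local combinatorial move known to preserve the 3-ball type, $|T'|$ is homeomorphic to the 3-ball $|T|$.

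The main obstacle is the clean topological justification that this local fold preserves the 3-ball property. The key verification is the link of $x{=}y$ in $T'$: it is the union $\II(x)\cup\II(y)$ glued along the edge $(a,b)$, which is a 2-disc (two discs glued along a single boundary edge) precisely because conditions (i) and (ii) forbid any further coincidences. All other vertex links are either unchanged or modified in a controlled way by the identification of boundary edges and faces, remaining discs at boundary vertices and spheres at interior vertices. With these link conditions verified and the identification recognized as inverse to open-a-2-face, $T'$ is a PL 3-manifold with boundary $S^2$, hence a 3-ball.
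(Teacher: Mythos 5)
The paper leaves this proof to the reader, so there is no house argument to compare against. Your combinatorial check is correct: conditions (i) and (ii) are exactly what forbid a loop at $x{=}y$, double edges to nodes $m\notin\{a,b\}$, and double faces or tetrahedra, so $T'$ is a valid simplicial complex, and your description of the link of $x{=}y$ (the two hemispheres glued along the single boundary edge $(a,b)$, hence a disc) is right. You do gloss over the links of $a$ and $b$, which change nontrivially: the boundary edges $(a,x)$, $(a,y)$ of $\mathrm{lk}(a)$ are glued, pushing $b$ into the interior of $\mathrm{lk}(a)$; condition (ii) again ensures the result is a disc, but this deserves a sentence rather than the blanket claim that other links are ``modified in a controlled way.''

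The topological conclusion has a genuine gap. Invoking ``open-a-2-face preserves the 3-ball type'' gives the wrong direction of implication: it tells you that a ball $T'$ opens to a ball, not that a complex which opens to a ball is itself a ball. And the final inference ``a PL $3$-manifold with boundary $S^2$, hence a $3$-ball'' is false in general: a punctured Poincar\'e homology sphere, or a punctured lens space, is a compact PL $3$-manifold with $S^2$ boundary and is not a ball. Repairing this along your lines would require establishing simple connectivity of $T'$ and appealing to the $3$-dimensional Poincar\'e theorem, which is heavy machinery. A more economical route: since $(x,y)$ is not an edge of $T$, the tetrahedron $\tau=(a,b,x,y)$ is not in $T$, and $T''=T\cup\tau$ is a triangulated $3$-ball (a tetrahedron glued to a ball along the boundary disc $(a,b,x)\cup(a,b,y)$). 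Condition (ii) translates exactly into the link condition $\mathrm{lk}_{T''}(x)\cap\mathrm{lk}_{T''}(y)=\mathrm{lk}_{T''}\bigl((x,y)\bigr)$, so contracting the edge $(x,y)$ preserves the PL type; and $T''/(x,y)$ is precisely $T'$, the added tetrahedron $\tau$ collapsing onto the face $(a,b,x{=}y)$. Hence $|T'|\cong|T''|\cong B^3$ directly.
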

\begin{proof}
  The proof is left to the reader.
\end{proof}

\begin{proposition}\label{p:k3}
  Under Hypothesis \ref{h:assumption}, there is a constant $K_3$
  such that for all $t$ and $f$ one has
  \begin{equ}
    |\TT_{t,f,0}| \le K_3^t~.
  \end{equ}
\end{proposition}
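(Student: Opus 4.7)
The strategy is to combine the tree-of-nuclei count of \pref{p:k2} with a bound on the number of ways to reconstruct a triangulation of the $3$-ball from a tree of nuclei using inverse open-a-2-face moves (identifications in the sense of \lref{l:identify}). Schematically, I would construct an injection
\begin{equation*}
\TT_{t,f,0}\ \hookrightarrow\ \bigsqcup_{f_0=f}^{4t}\mathcal{R}_{t,f_0}\times D(T_0),
\end{equation*}
where $D(T_0)$ is a set of ``reconstruction data'' per tree of nuclei, and then bound $|\mathcal{R}_{t,f_0}|\le K_2^t$ by \pref{p:k2} and $|D(T_0)|\le C^t$ by a direct combinatorial encoding.

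First, since $T\in\TT_{t,f,0}$ has no internal nodes, the decomposition procedure of \sref{s:part1} applied to $T$ (and forbidding cut-a-3-face) uses only open-a-2-face moves: the moves remove-1-tetra and split-a-node-along-a-path target internal nodes, which $T$ does not possess. The output is a rooted tree of nuclei $T_0\in\mathcal{R}_{t,f_0}$ with the same number $t$ of tetrahedra and some $f_0\ge f$; reading the moves in reverse gives a sequence of $k=(f_0-f)/2$ identifications rebuilding $T$ from $T_0$. By \lref{l:nucleus}, every nucleus with $t_i$ tetrahedra has at most $t_i+3$ external faces, so summing over the $v\le t$ nuclei of $T_0$ gives $f_0\le t+3v\le 4t$; hence only $O(t)$ values of $f_0$ are relevant.

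Second, for each fixed $f_0$, \pref{p:k2} yields $|\mathcal{R}_{t,f_0}|\le K_2^t$, and therefore $\sum_{f_0}|\mathcal{R}_{t,f_0}|\le 4t\cdot K_2^t$.

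Third, and here lies the main technical hurdle, I would bound $|D(T_0)|\le C^t$ for a constant $C$ independent of $T_0$. Each identification is specified by a pair of currently external faces sharing an edge, together with at most three rotational orientations. My plan is to encode the entire reconstruction by labelling each of the $f_0\le 4t$ external faces of $T_0$ by its ``fate'' in $T$: either it remains external, or it is identified with one of its (at most three) edge-neighbors. This gives at most $4^{f_0}$ raw labellings, plus at most $3^{f_0/2}$ rotation choices, yielding $|D(T_0)|\le C^{4t}$. Multiplying through:
\begin{equation*}
|\TT_{t,f,0}|\ \le\ 4t\cdot K_2^t\cdot C^{4t}\ \le\ K_3^t
\end{equation*}
for an appropriate constant $K_3$, as desired.

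The delicate point is the local-fate encoding in the third step. Some identifications may pair faces that were not adjacent in $T_0$ but only became adjacent after earlier identifications, so a naive ``label each $T_0$-face with one of its $T_0$-neighbors'' map is not obviously surjective onto all reconstruction sequences. I expect this to be handled either by exploiting the specific order in which open-a-2-face moves arise from the decomposition algorithm of \sref{s:part1} (to force every identified pair to be adjacent already in $T_0$), or by replacing the naive encoding with a non-crossing (Motzkin-type) structure on $f_0$ elements, inherited from the planarity of the boundary $2$-sphere; this would be the $3$-dimensional analogue of the non-crossing matching structure that underlies the Catalan count of $2$d polygon triangulations built from trees of triangles.
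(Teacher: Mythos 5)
Your overall framework is exactly the paper's: decompose $T\in\TT_{t,f,0}$ by open-a-2-face moves into a rooted tree of nuclei, use \pref{p:k2} and \lref{l:nucleus} to bound the number of trees and restrict $f_0\le 4t$, then bound the number of reconstruction sequences per tree. You also correctly identify the exact spot where your argument is incomplete: the ``fate'' encoding only records, for each $T_0$-face, which of its $T_0$-neighbors it merges with, and this cannot account for identifications between faces that were not adjacent in $T_0$. So the claimed bound $|D(T_0)|\le C^{4t}$ is not actually established by your encoding, and the proposal has a genuine gap at the step you flagged as delicate.

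The paper closes this gap not by forcing all identified pairs to be $T_0$-adjacent (your option (a) --- the decomposition algorithm does not give you that), nor by invoking a non-crossing structure (your option (b)), but by a short local counting argument. Identifying a pair of adjacent external faces is equivalent to choosing the external edge they share. Crucially, each such identification creates exactly \emph{two} new pairs of adjacent external faces. So one organizes the $D=(f'-f)/2$ identifications into rounds: $e_1$ identifications among pairs already adjacent in $T_0$ (at most $3D$ candidate external edges involved, hence $\binom{3D}{e_1}$ choices), then $e_2$ identifications among the at most $2e_1$ newly created adjacent pairs ($\binom{2e_1}{e_2}$ choices), then $e_3\le 2e_2$, and so on. Since $\sum e_i=D$ and each $\binom{a}{b}\le 2^a$, the whole reconstruction count is bounded by $2^{O(D)}$ times the number of compositions of $D$, which is $2^{O(f'-f)}=2^{O(t)}$. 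Multiplying by $|\mathcal{R}_{t,f'}|\le K_2^t$ and summing over $f'\le 4t$ gives $K_3^t$. The missing idea in your proposal is precisely this ``two new adjacent pairs per identification'' observation, which converts the unbounded-looking cascade of new adjacencies into a geometric series controlled by $D$.
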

\begin{proof}
Let $T\in\mathcal{T}_{t,f,0}$ be any rooted triangulation with no internal
nodes. Using repetitively the move open-a-2-face on $T$ transforms it into a
rooted triangulation $T'$ with no
internal nodes such that each internal face has 0, 1 or 3 external
edges. In other words, $T'$ is a rooted tree of nuclei.
Equivalently, given a rooted tree of nuclei $T'$ with $t'$ tetrahedra
and $f'$ external faces, one can count the number of ways one can
identify two adjacent external faces, \emph{without any conditions
guaranteeing ballness}. Multiplying this number by the number of rooted trees
of nuclei gives us an upper bound on the number of rooted triangulations
with no internal nodes.

We count the number of $T\in \mathcal{T}_{t,f,0}$ obtained by
identification from a
rooted tree of nuclei $T'$ with $t'$ tetrahedra and $f'$ external faces. This
means that we identify $D = (f'-f)/2$
pairs of adjacent external faces.

We first observe that choosing a pair of adjacent
external faces is equivalent to choosing an external edge.
We then note that some faces which are not adjacent in $T'$
might become adjacent after some identifications are done.
This means that we have a sequence $e_1,e_2,\dots,e_\ell$ with $e_i\ge1$ and
$\sum_i e_i = D$ which is defined as follows:
\begin{myitem}
 \item $e_1$ is the number of external edges (or equivalently of pairs of
adjacent external faces) of $T'$ which are identified.
 \item $e_2$ is the number of pairs of faces which were not adjacent in $T'$
but became so after the first series of $e_1$ identifications. However, each
identification of
 two adjacent external faces creates exactly two new pairs of adjacent
external faces, implying that $e_2 \le 2e_1$.
 \item $e_i$ is defined by analogy from the $e_{i-1}$ identifications,
implying
 that $e_i \le 2e_{i-1}$.
\end{myitem}
This leads to the following bound:
 \begin{equs}
  |\mathcal{T}_{t,f,0}| &\le \sum_{f'>f} |\mathcal{R}_{t,f'}| \sum_{\ell =
1}^{D\equiv (f'-f)/2} ~\sum_{\sum_{i=1}^\ell e_i = D , e_i\ge1}~
\binom{3D}{e_1}\binom{2e_1}{e_2}\dots\binom{2e_{\ell-1}}{e_\ell}~.
 \end{equs}
Since $\binom{a}{b} \le 2^a$, and since the number of external faces $f'$
in any rooted
tree of nuclei is bounded by four times the number of tetrahedra, we find, using \pref{p:k2} to bound 
$|\mathcal{R}_{t,f'}|$,
 \begin{equs}
  |\mathcal{T}_{t,f,0}|
                        &\le \sum_{f'>f} |\mathcal{R}_{t,f'}| 2^{5(f'-f)/2}
\sum_{\ell = 1}^{D\equiv (f'-f)/2} ~\sum_{\sum_{i=1}^\ell e_i = D ,
  e_i\ge1}\kern -1em 1 \\
                        &\le \sum_{f'>f} |\mathcal{R}_{t,f'}| 2^{5(f'-f)/2}
\sum_{\ell = 1}^{D\equiv (f'-f)/2} \binom{D-1}{\ell-1} \\
                        &\le \sum_{f'>f} |\mathcal{R}_{t,f'}| 2^{3(f'-f)} \\
                        &\le \sum_{f'=f+2}^{4t} K_2 ^{t} K_2^{3(f'-f)} \\
                        &\le K_2^{13t}= K_3^t~,
 \end{equs}
where $K_3 = K_2^{13}$.

The proof is complete.
\end{proof}

\subsubsection{Bounding the number of rooted triangulations
  (internal nodes included)}
In this section, we define the inverse moves of remove-1-tetra and
split-a-node-along-a-path and we use them to count the number of
rooted triangulations.

\begin{definition}
 We define the inverse move of remove-1-tetra, which we call
 \emx{adding a tetrahedron}\emph{:}
 Consider a triangulation $T$. Let $x$ be an external node with external degree
equal to 3 and let $a_1$, $a_2$ and $a_3$ be its external neighbors, \ie,
$(x,a_i)$ is
 an external edge. Adding a tetrahedron then consists in adding  the face
$(a_1,a_2,a_3)$ and the tetrahedron $(x,a_1,a_2,a_3)$.
\end{definition}

 We define the inverse move of split-a-node-along-a-path.

\begin{lemma}
 Consider a triangulation $T$. Let $(a,b)$ be an external edge. Assume that the
following conditions are satisfied:
   \begin{myitem}
     \item For each node $m$ such that $(m,a)$ and $(m,b)$ are edges, $(m,a,b)$
is a face.
     \item For each edge $e$ such that $(e,a)$ and $(e,b)$ are faces, $(e,a,b)$
is a tetrahedron.
     \item There are no faces $f$ such that $(f,a)$ and $(f,b)$ are both
tetrahedra.
   \end{myitem}
 Then, one can collapse the two nodes $a$ and $b$, and the result is
 again a 3-ball.  This move is called
 \emph{collapse of an external edge} or simply \emph{collapse}.
\end{lemma}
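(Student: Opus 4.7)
The plan is to view the collapse as the inverse of a split-a-node-along-a-path move and then deduce ballness from \lref{l:split}. Write $c$ for the node obtained by identifying $a$ and $b$, and define the collapse as the simplicial map that replaces every occurrence of $a$ or $b$ by $c$: a simplex containing both $a$ and $b$ drops in dimension by one, while a simplex containing only one of them keeps its dimension with $a$ or $b$ relabeled. First I would check that this map produces a well-defined simplicial complex $T'$, and then argue that $T$ can be recovered from $T'$ by a single split-a-node-along-a-path centred at $c$.

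The three hypotheses are precisely what is needed to prevent illegitimate identifications of simplices. Two edges $(m,a)$ and $(m,b)$ both map to $(m,c)$; by the first hypothesis the triangle $(m,a,b)$ also lies in $T$ and maps to $(m,c)$ as well, so the apparent identification of two edges is really a triangle collapsing onto one of its edges. The second hypothesis plays the same role one dimension higher: whenever two faces $(e,a)$ and $(e,b)$ share an image $(e,c)$, the tetrahedron $(e,a,b)$ is in $T$ and collapses onto this face. The third hypothesis rules out the only remaining degeneracy, namely two distinct tetrahedra $(f,a)$ and $(f,b)$ being identified on a single face $(f,c)$, which would have no interpretation inside a 3-manifold.

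To show that $T'$ is a 3-ball, I would reconstruct $T$ from $T'$ by a split. Let $\gamma$ be the path in the hemisphere $\II(c) \subset T'$ whose interior nodes are the $m$'s of hypothesis~1 and whose edges are the $e$'s of hypothesis~2; its two endpoints are the external nodes $p$ and $q$ of $T$ opposite to $(a,b)$ in the two external faces containing $(a,b)$. Splitting $c$ along $\gamma$ reintroduces the two nodes $a$ and $b$, the $|\gamma|$ tetrahedra $(a,b,e)$ and the faces $(a,b,m)$, which are exactly the simplices lost in the collapse. By \lref{l:split} the split of a 3-ball along a splitting path is again a 3-ball, so running the argument in reverse shows that $T'$ is a 3-ball.

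The main obstacle will be verifying that the $\gamma$ just defined is a bona fide splitting path in $\II(c)$: that it is simple, lies in the interior of $\II(c)$, and connects two distinct nodes of $\partial\II(c)$. One must check that the $m$'s and $e$'s assemble consistently into a single path without branchings or disconnections, and that the endpoints $p$ and $q$ really lie on $\partial\II(c)$, a fact that ultimately uses the assumption that $(a,b)$ was external in $T$ so that the two external faces through $(a,b)$ supply precisely those endpoints. Once $\gamma$ is confirmed to be a splitting path, the bijection between the simplices of $T$ and those of the split of $T'$ along $\gamma$ is a direct bookkeeping check using the f-vector change of \lref{l:split}.
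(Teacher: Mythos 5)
The paper leaves this proof to the reader, but the companion statement \lref{p:contract} shows what is intended: conditions~1--3 are exactly the link condition $\II(a)\cap\II(b)=\II\bigl((a,b)\bigr)$ written out by dimension, and the point is that under this condition contracting the edge $(a,b)$ is a PL homeomorphism of the ball. Your route through \lref{l:split} does not quite get there. A minor issue is that you invoke \lref{l:split} in the wrong direction: it asserts that splitting a 3-ball produces a 3-ball, whereas you need that a complex whose split is a 3-ball was itself one; that converse would follow if the split were known to be a PL homeomorphism, but \lref{l:split} does not say so and its proof only counts the f-vector.

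The more serious issue is that the collapse is in general \emph{not} the inverse of any split, so there may be no $\gamma$ to feed back into \lref{l:split}. The path you construct (the image of $\II\bigl((a,b)\bigr)$) need not be a splitting path of $\II(c)$: an edge $e'$ of it lies on $\partial\II(c)$ whenever one of the faces $(e',a)$, $(e',b)$ is external in $T$, and nothing in hypotheses~1--3 forbids this. A concrete instance: take $T$ to be the three tetrahedra $(a,b,p,m)$, $(a,b,m,q)$, $(a,p,m,q)$. One checks that all three hypotheses hold for the external edge $(a,b)$; collapsing it yields the single tetrahedron $(c,p,m,q)$ --- a 3-ball, so the lemma's conclusion holds --- but a single tetrahedron admits no split whatsoever, so $T$ cannot be recovered as a split of $T'$ along any path. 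The collapse therefore has to be analysed on its own terms, by showing that conditions~1--3 both prevent illegitimate identifications of simplices (which you do observe) \emph{and} force the quotient complex to be PL homeomorphic to $T$; this second part is the substance of the link condition recorded in \lref{p:contract}, and it is not supplied by viewing the collapse as an inverse split.
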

\begin{proof}
  The proof is left to the reader.
\end{proof}

The three conditions of a collapse can be reformulated in the following manner:

\begin{lemma}\label{p:contract}
 Let $e = (a,b)$ be an external edge. The edge $e$ is collapsible if and only
if
\begin{equ}
 \mathcal{I}(a)\cap\mathcal{I}(b)=\mathcal{I}(e)~,
\end{equ}
where $\mathcal{I}(a)$ is the hemisphere of $a$ and $\mathcal{I}(e)$ is the
semi-circular flower of $e$.
\end{lemma}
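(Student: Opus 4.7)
The plan is to prove the equivalence by a simplex-by-simplex unpacking of the collapsibility conditions. The only subtlety is to fix a reading of the intersection $\II(a)\cap\II(b)$: although the hemispheres are defined in the excerpt as \emph{sets of faces}, they are genuine 2d subcomplexes of $T$, and the equality asserted in the lemma is meant simplex by simplex. I would therefore start by recording the dimensional dictionary: a face $f$ is in $\II(a)$ iff $(a,f)$ is a tetrahedron of $T$; an edge $(x,y)$ is in $\II(a)$ iff $(a,x,y)$ is a face of $T$; a node $m$ is in $\II(a)$ iff $(a,m)$ is an edge of $T$. Each of these equivalences is immediate from the fact that every face of $T$ bounds some tetrahedron and every edge bounds some face. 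The analogous dictionary holds with $b$ in place of $a$.

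One inclusion is then automatic: $\II(e)\subseteq\II(a)\cap\II(b)$ holds with no hypothesis, since every edge $e'=(x,y)\in\II(e)$ comes with a tetrahedron $(a,b,x,y)$ whose faces $(b,x,y)$ and $(a,x,y)$ lie in $\II(a)$ and $\II(b)$ respectively, so the edge $(x,y)$ and its two endpoints sit in both hemispheres. The two boundary nodes of $\II(e)$ (the external nodes $p,q$ with $(a,b,p)$, $(a,b,q)$ external faces) are handled identically.

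The content of the lemma is therefore the reverse inclusion, and here each dimension is controlled by exactly one of the three collapsibility conditions. A shared node $m\in\II(a)\cap\II(b)$ means that $(a,m)$ and $(b,m)$ are both edges of $T$; condition~1 then asserts $(a,b,m)$ is a face of $T$, which is precisely the condition for $m$ to be a node of $\II(e)$. A shared edge $(x,y)$ means that $(a,x,y)$ and $(b,x,y)$ are both faces; condition~2 gives the tetrahedron $(a,b,x,y)$, placing $(x,y)$ in $\II(e)$. A shared face $f$ would require $(a,f)$ and $(b,f)$ both to be tetrahedra, which condition~3 forbids, consistent with $\II(e)$ carrying no 2d simplices. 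Running each of these three correspondences backwards yields the converse. I do not foresee any real obstacle beyond being explicit about the dictionary above; the argument is purely a bookkeeping equivalence once the intersection has been interpreted as subcomplexes rather than literal sets of faces.
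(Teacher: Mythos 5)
Your proof is correct and follows essentially the same route as the paper: both interpret $\II(a)$, $\II(b)$, $\II(e)$ as simplicial subcomplexes, decompose the asserted equality by dimension into statements about shared nodes, edges, and faces, and match those dimension-by-dimension to the three collapsibility conditions. You spell out the simplex dictionary and the automatic inclusion $\II(e)\subseteq\II(a)\cap\II(b)$ more explicitly than the paper, which simply declares the two formulations ``clearly equivalent,'' but the underlying argument is the same.
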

\begin{proof}
 By definition, an edge $e=(a,b)$ is collapsible if and only if
 \begin{myitem}
     \item For each node $m$ such that $(m,a)$ and $(m,b)$ are edges, $(m,a,b)$
is a face.
     \item For each edge $e$ such that $(e,a)$ and $(e,b)$ are faces, $(e,a,b)$
is a tetrahedron.
     \item There are no faces $f$ such that $(f,a)$ and $(f,b)$ two tetrahedra.
   \end{myitem}

 Any graph is defined as a set of vertices and a set of edges. A 2d
triangulation is a graph that can be defined as a set of nodes, a set of edges
and a set of faces,
 and a 1d triangulation as a set of nodes and a set of edges. $\mathcal{I}(a)$
is a 2d triangulation and $\mathcal{I}(e)$ is a 1d triangulation.
 Let $\mathcal{V}(a)$, $\mathcal{L}(a)$ and $\mathcal{F}(a)$ be the sets of
vertices, edges and faces of $\mathcal{I}(a)$ and $\mathcal{V}(e)$,
$\mathcal{L}(e)$ those of
 $\mathcal{I}(e)$. The proposition is equivalent to the following
 \begin{equs}
  \mathcal{V}(a) \cap \mathcal{V}(b) &= \mathcal{V}(e)~, \\
  \mathcal{L}(a) \cap \mathcal{L}(b) &= \mathcal{L}(e)~, \\
  \mathcal{F}(a) \cap \mathcal{F}(b) &= \emptyset~.
 \end{equs}
The two definitions are clearly equivalent.
\end{proof}

In \sref{s:intnodes}, we described an algorithm which transforms any
triangulation with f-vector $\langle t,f,n \rangle$ into a triangulation
with f-vector $\langle t',f',0 \rangle$. We have the following lemma:
\begin{lemma}\label{l:k4}
 There is a constant $K_4>0$ such that the f-vectors $\langle t,f,n \rangle$ and $\langle t',f',0 \rangle$
satisfy the following linear relation:
\begin{equ}\label{e:linbound}
t'\le K_4 t~,\qquad f'\le K_4t~,
\end{equ}
\end{lemma}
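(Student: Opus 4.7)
The plan is to combine Corollary~\ref{c:increase} with the Euler relations \eref{e:euler}, exactly as outlined in the Summary. From the last line of \eref{e:euler} together with $n_\i=n_{\rm tot}-n_\s$, one extracts
\begin{equation*}
t \;=\; e_\i - n_\i + f/2 - 1~.
\end{equation*}
At the end of the recursion $n_\i'=0$, so $t' = e_\i' + f'/2 - 1$. Hence to bound $t'$ and $f'$ in terms of $t$, it suffices to bound $e_\i'$ and $f'$ separately.

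For $e_\i'$, Corollary~\ref{c:increase} gives the increase in internal edges: $e_\i'\le e_\i + C_\Delta(t+n_\i)$. The initial value satisfies $e_\i = t+n_\i-f/2+1 \le t+n_\i+1$, and the trivial estimate $n_\i\le 4t$ from the introduction yields $e_\i \le 5t+1$, so $e_\i' \le (5C_\Delta+5)t+1 = O(t)$.

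For $f'$, I track the increment of $f_\s$ contributed by each elementary move performed in \sref{s:intnodes}: each of open-a-2-face, split-a-node-along-a-path and remove-1-tetra changes $f_\s$ by $+2$. Therefore $f' = f + 2N$, where $N$ is the total number of moves. The number of remove-1-tetra moves equals the total number of internal nodes eliminated, namely $n_\i\le 4t$. Each split move adds at least one internal edge (as noted in the proof of \lref{l:changec21}); similarly each open-a-2-face is associated with an internal face of two external edges and contributes to the internal-edge count tracked by \pref{p:theprop}. Since the total added internal edges is bounded by $C_\Delta(t+n_\i)\le 5C_\Delta t$, the number of splits and opens is at most $5C_\Delta t$. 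Combining, $N\le 5C_\Delta t + 4t$, and hence $f' \le f + 2N \le 4t + 10C_\Delta t + 8t = O(t)$.

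Plugging these two bounds back into $t' = e_\i' + f'/2 - 1$ gives $t' \le K_4 t$ and $f' \le K_4 t$ with $K_4$ a constant depending only on $C_\Delta$ (hence ultimately on the absolute constant $C'_\Delta$ of \pref{p:theprop}). The main point requiring care is the move-count inequality: namely that \emph{every} move in the algorithm either removes an internal node or adds at least one internal edge, so that $N$ is controlled by $n_\i$ plus the increment in $e_\i$ supplied by Corollary~\ref{c:increase}. Once this bookkeeping is checked, the rest of the argument is essentially substitution into the Euler relations.
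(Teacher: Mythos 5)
Your proposal follows the same approach as the paper's (extremely terse) proof: use Corollary~\ref{c:increase} to bound the net increase $e'-e$ of internal edges, then combine the Euler relation $t'=e_\i'+f'/2-1$ with $n_\i,f_\s\le 4t$. The paper gives no further details, so your explicit bookkeeping is a reasonable filling-in of the argument, and the conclusion is correct. Two small inaccuracies are worth flagging. First, the algorithm of \sref{s:intnodes} whose output $\langle t',f',0\rangle$ the lemma refers to uses only split-a-node-along-a-path and remove-1-tetra; the move open-a-2-face enters only afterwards, when a ball with no internal nodes is cut into nuclei. Your inclusion of open-a-2-face in the move count is therefore spurious, though harmless (removing it only tightens the bound). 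Second, the assertion that every split adds at least one internal edge --- which you need to bound the number of splits by the edge increase --- is indeed made in the paper, but it appears inside the proof of \pref{p:theprop}, not \lref{l:changec21}; and it deserves a word of care, since a split along a single-edge path has $|\gamma|=1$ and adds $|\gamma|-1=0$ internal edges. The cleaner justification, implicit in the proof of \pref{p:theprop}, is that the number of splits is bounded by $\sum_{\ss}|\tilde\gamma_\ss|$, which by Theorems~\ref{t:c2c1} and \ref{t:c1c0} is at most the number of edges in the relevant hemispheres, hence $O(t+n_\i)$; this gives the same $O(t)$ control on $f'$ and $t'$ without invoking the per-split claim. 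With that substitution your argument is complete and matches the paper's route.
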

\begin{proof}
Let $e,e'$ be the number of internal edges of both triangulations. 
 By \cref{c:increase}, we have $e'-e\le C_\Delta (t+n_\i)$.
 Using \eref{e:euler} and $f_\s,n_\i \le 4t$, the result follows.
\end{proof}

This proves that any triangulation in $\TT_{t,f,n}$ can be obtained from a
triangulation with no internal nodes in $\TT _{t',f',0}$ with
a series of carefully chosen collapses and additions of tetrahedra, with
$t,f,n,t',f'$ satisfying \eref{e:linbound}.

We can now use a similar approach to that of the previous section.
It is clear that choosing a triplet of external faces for the move add-1-tetrahedron
is equivalent to choosing an external node $x$, and that
choosing a couple of external nodes for collapse is equivalent to choosing an
external edge.

\subsection{Combining the bounds}

Before we state our main result, we recall the
\begin{oneshot}
There is a finite constant $K_1>1$ such that the number $\rho(t,f)$ of
face-rooted nuclei  with f-vector $\langle t,f\rangle$ is bounded by $K_1^{t}$.
\end{oneshot}

\begin{theorem}\label{t:main}
  Under Hypothesis \ref{h:assumption} one has the bound: There is
  a finite constant $C$ such that the
  number of rooted triangulations with f-vector $\langle t,f,n\rangle$
  is bounded by
  \begin{equ}\label{e:thebound}
    |\TT_{t,f,n}| \le C ^{t}~.
  \end{equ}
\end{theorem}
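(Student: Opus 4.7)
The plan is to combine the algorithm of \sref{s:intnodes}, which eliminates all internal nodes, with the bound of \pref{p:k3} on triangulations without internal nodes, and then adapt the combinatorial counting of \pref{p:k3} to the two remaining inverse moves (collapse and add-tetrahedron).

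First, given a rooted triangulation $T \in \TT_{t,f,n}$, I would apply the algorithm of \sref{s:intnodes} to obtain a rooted triangulation $T''$ with f-vector $\langle t'', f'', 0\rangle$. By \lref{l:k4}, one has $t'' \le K_4 t$ and $f'' \le K_4 t$. Hence by \pref{p:k3}, the total number of triangulations $T''$ that can arise is bounded by
\begin{equation*}
\sum_{t''\le K_4 t,\, f''\le K_4 t} K_3^{t''} \;\le\; (K_4 t)^2\, K_3^{K_4 t} \;\le\; K_5^{t}
\end{equation*}
for a suitable constant $K_5>1$, after absorbing the polynomial factor into the exponential.

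Second, for each fixed $T''$ I would bound the number of rooted $T \in \TT_{t,f,n}$ that reduce to $T''$. Every such $T$ is obtained from $T''$ by a sequence of inverse moves: adding tetrahedra (inverse of remove-1-tetra) and collapsing external edges (inverse of split-a-node-along-a-path). The number of add-tetra operations is exactly $n$ (one per internal node to be re-created); each collapse reduces the number of external nodes by one while each add-tetra does likewise, so the total number of inverse moves is bounded by the initial external node count $n_\s''\le f''/2+2 = O(t)$. As noted at the end of \sref{s:part2}, we may over-count by dropping the 3-ballness conditions of \lref{l:identify} and \lref{p:contract}; at each step the move is then specified by an external node of degree~$3$ (for add-tetra) or an external edge (for a collapse), each drawn from a pool of size $O(t)$.

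The key combinatorial step, patterned on \pref{p:k3}, is to bound the number of such move-sequences by $2^{O(t)}$ rather than the naive $t^{O(t)}$. Following the recipe of that earlier proof, one writes the number of sequences as a product $\binom{A_1}{k_1}\binom{A_2}{k_2}\cdots$ where the $A_i=O(t)$ count the pool of available external edges/degree-$3$ nodes and $\sum_i k_i = O(t)$, and uses $\binom{A}{k}\le 2^{A}$ together with the fact that each inverse move creates only a linearly bounded number of genuinely new choices for subsequent moves, in direct analogy with the bound $e_i\le 2e_{i-1}$ used in \pref{p:k3}. The main obstacle is precisely this last point: one must track carefully how a collapse or an add-tetra alters the set of external edges and degree-$3$ external nodes, in order to keep the exponent linear in $t$. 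Combining the two factors yields $|\TT_{t,f,n}| \le K_5^{t}\cdot 2^{Ct}$, which gives the bound $|\TT_{t,f,n}|\le C^{t}$ of \eref{e:thebound}.
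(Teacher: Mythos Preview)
Your proposal is correct and follows essentially the same route as the paper. The only part you leave open is the one you flag as ``the main obstacle,'' and the paper's resolution is exactly what you anticipate: the inverse moves are organised into alternating batches---first cover $n_1$ degree-$3$ external nodes of $T'$ with tetrahedra, then collapse $m_1$ external edges, then cover $n_2$ nodes, and so on---and one checks the locality bounds $n_{i+1}\le 3(n_i+m_i)$ (a node can newly acquire external degree $3$ only if it neighbours a just-covered node or is one of the three surviving nodes attached to a just-collapsed edge) and $m_{i+1}\le 3n_i$ (an edge can become newly collapsible only through an adjacent add-tetra). Together with $\sum n_i=n$ and $\sum(n_i+m_i)=(f'-f)/2$, these feed into a product of binomials bounded by $2^{O(t)}$, exactly parallel to the $e_i\le 2e_{i-1}$ argument in \pref{p:k3}.
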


\begin{proof}

Consider a rooted triangulation $T\in\mathcal{T}_{t,f,n}$ with $t$ tetrahedra,
$f$ external faces and $n$ internal nodes.
We showed that $T$ can be obtained from a rooted triangulation
$T'\in\mathcal{T}_{t',f',0}$ by a series of carefully chosen collapses and
additions of tetrahedra.

Note that the algorithm of \sref{s:intnodes} which transforms $T$ into
$T'$ can always be stopped when the last internal node of $T$ is
removed.
This implies that, in the inverse construction we are doing now, we
must start by adding tetrahedra to $T'$, and not by collapsing
external edges.
So the first step is to choose $n_1$ external nodes (of external
degree 3) out of the $f'/2+2$ external nodes of $T'$, and to insert a
tetrahedron on each of them with one tip at the node. We call this
``covering the node''.

This reduces the number of external edges from $3f'/2$ to $3(f'/2-n_1)$.
Then, we choose $m_1$ external edges and we collapse them.

\begin{remark}
 Any labeled triangulation is simply defined by the list of its tetrahedra
$\LL_t$. In this point of view, collapsing an external edge $e$ is
 simply the operation where we remove from $\LL_t$ all the tetrahedra of
$\EE(e)$.
 Let $e_1$ and $e_2$ be two collapsible edges. The construction
 implies that the order in which we collapse them is irrelevant and
 so, the idea that
 we simultaneously collapse $m_1$ edges makes sense.

One should pay attention to the case where we collapse two edges $e_1=(a,b_1)$ and $e_2=(a,b_2)$ such that 
$(b_1,b_2)=e_3$ is an edge. In this case, all tetrahedra sharing one of the three edges are removed.
Clearly, this yields the same result regardless of the order in which we collapse $e_1$ and $e_2$.
\end{remark}

The next step is to choose $n_2$ external nodes among the new
possibilities which appear after performing the first series of
coverings and collapses,
and cover them.
For each external edge $e$, we can associate four nodes: the two endpoints of
$e$ and the two nodes $x_1,x_2$ such that $(x_i,e)$ is an external face.
Assume that $x$ is one of the $n_2$ chosen external nodes.
The fact that $x$ appeared after the first series implies that $x$ is
either one of the four nodes associated with one of the
$m_1$ collapsed edges (note that these four nodes become three after the
collapse),
or that there is a node $y$ among the first $n_1$ nodes such that $(x,y)$ was
an external edge (before covering $y$ with a tetrahedron).
But each such $y$ has exactly 3 external neighbors. This implies that
$n_2 \le 3m_1 + 3n_1$ and
the number of ways to choose these nodes is bounded by $$\binom
{3(m_1+n_1)}{n_2}~.$$
Continuing in this way, we choose $m_2$ external edges and we collapse them.
Let
$e$ be such an edge. Again, $e$ was not among the first $m_1$ edges.
This implies that there must be a node $x$ of the series of $n_2$ covered
external nodes such that $(e,x)$ formed an external face before
covering $x$ with a tetrahedron. But for each such $x$ there are
exactly three external edges
satisfying this condition. We deduce that $m_2\le 3n_2$.

We continue adding tetrahedra and collapsing edges. This leads to two
sequences $n_i,m_i,i=1,\dots,\ell$, with $\ell\le n$, satisfying:
\begin{equs}[e:last]
1 \le n_i~,\qquad  0&\le m_i   \le 3n_i~,\qquad
 \sum_{i=1}^\ell n_i = n~, \\
 1 \le n_{i} &\le 3n_{i-1} + 3m_{i-1}~, \quad i>1~, \\
 \sum _{i=1} ^\ell& 2n_i + 2m_i + f = f'~.
\end{equs}
Note that some, or all, of the $m_i$'s might be equal to zero.
Using \eref{e:last} we get a bound
 \begin{equs}
  |\mathcal{T}_{t,f,n}| \le \sum_{t',f'} |\mathcal{T}_{t',f',0}|
  &\sum_{\ell =1}^{n} ~\sum_{\sum_{i=1}^\ell n_i= n ,
    n_i\ge1}~\sum_{\sum_{i=1}^\ell m_i = (f'-f)/2-n , m_i\ge0}~\\
&\times\binom{f'/2+2}{n_1}\binom{3(n_1+m_1)}{n_2}
\cdots\binom{3(n_{\ell-1}+m_{\ell-1})}{n_\ell} \\
&\times\binom{3f'/2}{m_1}\binom{3n_1}{m_2}\cdots\binom{3n_{\ell-1}}{m_\ell}~,
 \end{equs}
where the sum over $t',f'$ is restricted by \eref{e:linbound}.
Bounding each binomial by a power of 2 and using \pref{p:k3},
\eref{e:last} and \eref{e:linbound}, we get, as in the proof of \pref{p:k3},
 \begin{equs}
    |\mathcal{T}_{t,f,n}| &\le \sum_{t',f'\le K_4t}  K_3^{t'} 
 \le C^t~.
 \end{equs}
This shows \eref{e:thebound} and completes the proof.

\end{proof}
\clearpage

\begin{acknowledge}
We profited from useful discussions with G.~Ziegler. This work was
partially supported by the Fonds National Suisse and by an ERC FP7 ``Ideas''
Advanced Grant.
\end{acknowledge}
\bibliographystyle{JPE}
\addcontentsline{toc}{section}{References}
\bibliography{paper}

\end{document}